\renewcommand{\dh}{d_H}
\newcommand{\dhd}{d_{\vec{H}}}
\newcommand{\dd}{\vec{d}}
\newcommand{\dilate}[1]{#1^\oplus}
\newcommand{\eps}{\varepsilon}
\newcommand{\coll}{\mathcal M}
\DeclareMathOperator*{\argmin}{argmin}
\newcommand{\R}{\ensuremath{\mathbb R}\xspace}
\newcommand{\seg}{\ensuremath{\textnormal{seg}}}
\newcommand{\magicValue}{magic value\xspace}
\theoremstyle{plain}
\newtheorem{theorem}{Theorem}
\newtheorem{lemma}[theorem]{Lemma}
\title{Between Shapes, Using the Hausdorff Distance%
\footnote{Research on the topic of this paper was initiated at the 4th Workshop on Applied Geometric Algorithms (AGA 2018) in Langbroek, The Netherlands, supported by the Netherlands Organisation for Scientific Research (NWO) under project no. 639.023.208. The second author is supported by the NWO Veni grant EAGER. The first and fifth authors are supported by the NWO TOP grant no.~612.001.651.}
}
\author{Marc van Kreveld \and Tillmann Miltzow \and Tim Ophelders \and Willem Sonke \and Jordi L. Vermeulen}
\date{}
\begin{document}

\maketitle

\begin{abstract}
Given two
shapes
$A$ and~$B$ in the plane with Hausdorff distance~$1$, is there
a shape~$S$ with Hausdorff distance $1/2$ to and from $A$ and~$B$?
The answer is always yes, and depending on convexity of~$A$ and/or~$B$,
$S$ may be convex, connected, or disconnected.
We show that our result can be generalised to give an interpolated shape
between \(A\) and \(B\) for any interpolation variable \(\alpha\) between 0 and
1, and prove that the resulting morph has a bounded rate of change with respect
to \(\alpha\).
Finally, we explore a generalization of the concept of a Hausdorff middle to
more than two input sets. We show how to approximate or compute this middle
shape, and that the properties relating to the connectedness of the Hausdorff
middle extend from the case with two input sets. We also give bounds on the
Hausdorff distance between the middle set and the input.
\end{abstract}

\section{Introduction}

\begin{figure}[t]
    \centering
    \includegraphics{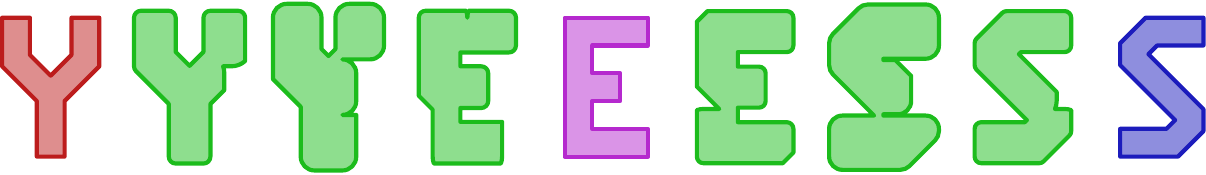}
    \caption{Hausdorff morphs between three shapes.}
    \label{fig:morphs-intro}
\end{figure}

For two sets \(A\) and~\(B\) in~\(\mathbb{R}^2\), we define the \emph{directed Hausdorff distance} as
\begin{equation*}
    \dhd(A, B) \coloneqq \adjustlimits\sup_{a \in A} \inf_{b \in B} d(a, b) \text{,}
\end{equation*}
where \(d\) denotes the Euclidean distance. The \emph{undirected Hausdorff distance} is defined as
\begin{equation*}
    \dh(A, B) \coloneqq \max(\dhd(A, B), \dhd(B, A)) \text{.}
\end{equation*}
If $A$ and $B$ are closed sets then
\(\dh(A, B) = r\)
is equivalent to saying that \(r\) is the smallest value such that \(A \subseteq B \oplus D_r\)
and \(B \subseteq A \oplus D_r\), where \(\oplus\) denotes the Minkowski sum,
and \(D_r\) is a disk of radius \(r\) centered at the origin. Recall that the
Minkowski sum of sets \(A\) and~\(B\) is the set \(\{ a + b~|~a \in A,~b \in B\}\).
In this paper we consider only closed sets, and therefore we can freely use this containment property.

The Hausdorff distance has been widely used in computer vision~\cite{dubuisson1994modified} and computer graphics~\cite{aspert2002mesh,cignoni1998metro} for tasks such as template matching, and error computation between a model and its simplification. At the same time,
the Hausdorff distance is a classic mathematical concept.
Our research motivation is to study this profound concept from a new perspective.
Algorithms to compute the Hausdorff distance between two given sets are available for many types
of sets, such as points, line segments, polylines, polygons, and
simplices in $k$-dimensional Euclidean space~\cite{alt95,alt03,atallah83}.
However, the question whether a polynomial-time algorithm
exists to compute the Hausdorff distance between general semialgebraic sets remains open~\cite{dobbins18}.

In this paper, we consider the natural problem of finding a set that lies ``between'' two or more input sets, in a Hausdorff sense.
In Section~\ref{sec:2sets} we investigate
the Hausdorff middle of sets $A$ and~$B$; this is a set that has minimum undirected Hausdorff distance to $A$ and~$B$. Differently put, it minimizes the maximum of four directed Hausdorff distances.
We show that when the Hausdorff distance between $A$ and~$B$ is assumed to be $1$, there is always a Hausdorff middle that has Hausdorff distance $1/2$ to $A$ and~$B$, and this is the best possible. We relate the convexity of $A$ and/or $B$ to the convexity and connectedness of the Hausdorff middle, and study its combinatorial complexity.

We actually treat the middle more generally, by defining
a class of sets that smoothly interpolate between $A$ and~$B$, giving a morph between them. Figure~\ref{fig:morphs-intro} shows two examples of such morphs. We prove that for two given intermediate shapes in the morph, the difference between the interpolation parameters bounds the Hausdorff distance between the shapes.

Algorithms for morphing, sometimes called \emph{shape interpolation}, have been widely studied. A classical application is the reconstruction of a 3D object from 2D slices, a common problem in medical imaging. Many algorithms that solve this problem exist, based on straight skeletons~\cite{barequet04,barequet08}, curve matching and triangulations~\cite{barequet96}, and Delaunay triangulations~\cite{boissonnat88}. When considering more abstract applications, a typical approach is to first transform each input shape into a cannonical form, and then morph between those. Alt and Guibas~\cite{alt00} give an overview of this approach. Finally, work has been done to ensure the interpolation of two simple polygons is itself a simple polygon~\cite{gotsman01}.

A common thread in all these algorithms is that they are based on computing some kind of correspondence between features of the input shapes, either by explicitly matching parts of the boundary, or by computing some geometrical structure (like a Voronoi diagram or a straight skeleton). In addition, most of these morphing algorithms interpolate only the boundary of the input shapes, and keep all intermediate shapes polygonal. Our approach does not require any correspondence between features of the input to be calculated. However, our approach is unusual in the sense that the intermediate shapes when morphing between e.g.\ two polygons are not necessarily polygons themselves.

In \cref{sec:3sets} we extend the results of \cref{sec:2sets} to Hausdorff middles of more than two sets and generalize several results.
We assume that the maximum Hausdorff distance over all pairs of input sets is $1$ and
examine the smallest Hausdorff distance for a middle set.
That is, given sets $\coll = \{A_1,\ldots,A_k\} $, we are interested in the value $\alpha(\coll) = \min_{S} \max_{i = 1,\ldots,k} \dh(A_i,S)$.
This value $\alpha(\coll)$ is no longer $1/2$, but depends on the input.
For convex sets, we show that a value $\approx 0.608$ can always be achieved and is sometimes necessary, whereas for non-convex sets a value of $1$ may be required.
For a given set of polygons with total combinatorial complexity $n$, we show that $\alpha(\coll)$ and the Hausdorff middle can be computed in $O(n^6)$ time, and, for any constant $\eps>0$, $(1+\eps)$-approximated in $O(n^2\log^2n\log 1/\eps)$ time.
We note that other interpolation methods between two shapes do not have a natural generalization to a middle of three or more shapes.

Our proofs use three types of arguments.
First, many of our arguments rely on simple manipulations of the formal definition of the Hausdorff distance.
The second type of argument is of a topological nature.
Using continuity and connectivity, we infer related properties to the output,
by constructing topological structures or conclude that they cannot exist.
The third type of argument uses $2$-dimensional Euclidean geometry directly.
We construct features, like vertices, edges and circular arcs, and argue about their
existence, and give distance bounds.
These arguments are often intricate
and do not generalize.
They are of particular value, as the $2$-dimensional Euclidean plane
is often the most interesting case in computational geometry.

\begin{figure}
    \centering
    \includegraphics[page=2]{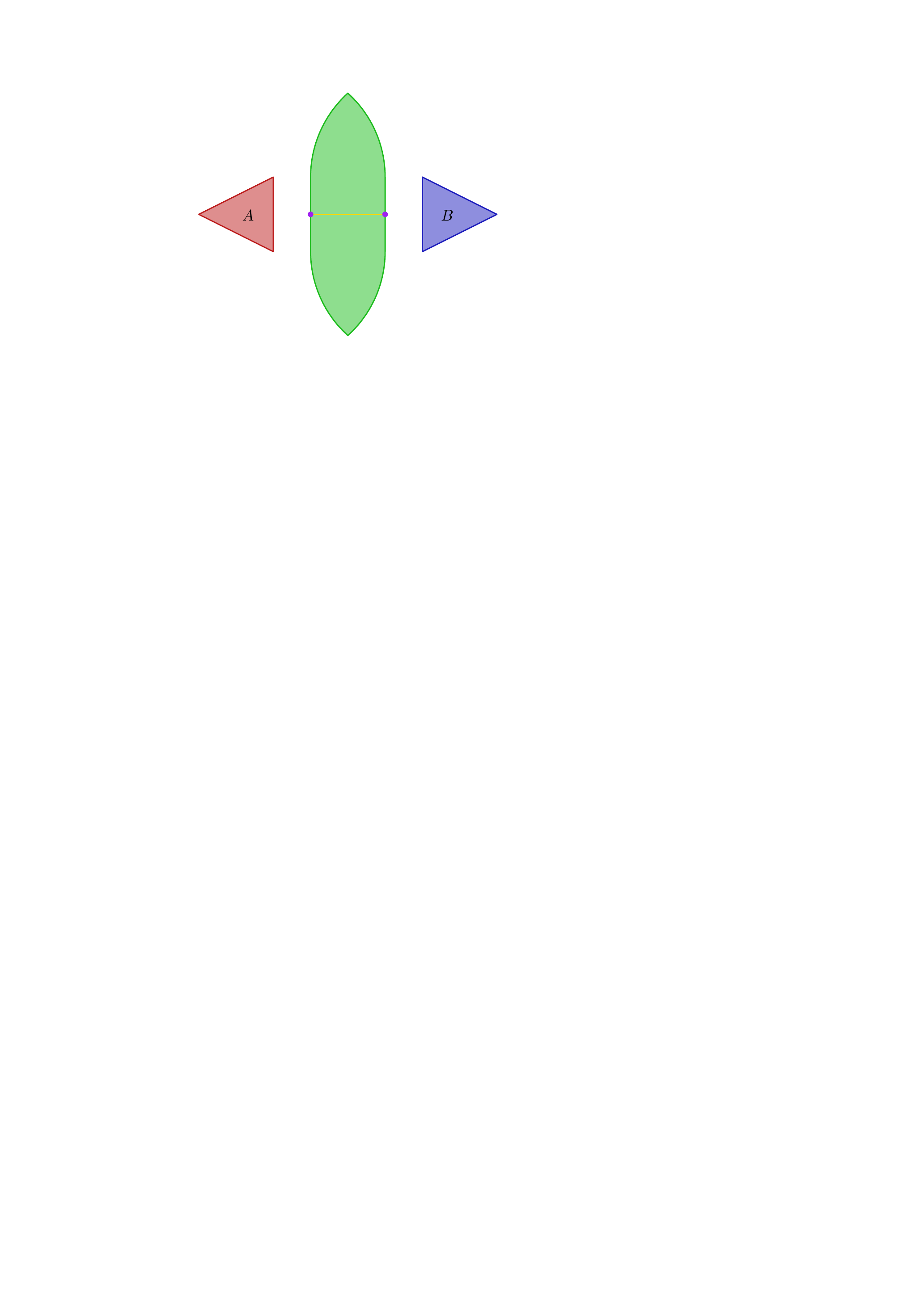}
    \caption{Three possible Hausdorff middles of $A$ and~$B$: two points, a line segment, and $S_{1/2}$.}
    \label{fig:middle-example}
\end{figure}

\section{The Hausdorff middle of two sets}
\label{sec:2sets}

Consider two compact sets \(A\) and~\(B\) in~\(\mathbb{R}^2\); we are interested in computing a \emph{Hausdorff middle}: a set~\(C\) that minimizes the maximum of the undirected Hausdorff distances to \(A\) and~\(B\). That is,
\begin{equation*}
    C \in \argmin_{C'}\max(\dh(A, C'), \dh(B, C')) \text{.}
\end{equation*}
Note that there may be many such sets that minimize the Hausdorff distance; see \cref{fig:middle-example} for a few examples. It might seem intuitive to restrict \(C\) to be the minimal set that achieves this distance, but such a set is not necessarily unique, and the common intersection of all minimal sets is not a solution itself (see Figure~\ref{fig:minimal-sets}).
However, the maximal set is unique. Let \(\dh(A, B) = 1\). Then
\begin{equation*}
    S(A, B) \coloneqq (A \oplus D_{1/2}) \cap (B \oplus D_{1/2})
\end{equation*}
is the unique maximal set with Hausdorff distance \(1/2\) to \(A\) and \(B\) (we prove this below in \cref{lem:maximal}; see the right of \cref{fig:middle-example} for an example of what \(S\) looks like). Note that in the rest of the paper we omit the arguments and simply write \(S\), as the arguments are always clear from context. We want to show that \(\dh(A, S)\leq 1/2\) and \(\dh(B, S)\leq 1/2\).
In fact, we can prove a more general statement.

We define
\begin{equation*}
    S_\alpha(A, B) \coloneqq (A \oplus D_\alpha) \cap (B \oplus D_{1 - \alpha})
\end{equation*}
for \(\alpha \in [0, 1]\), and we use $\seg(a,b)$ to denote the line segment connecting points $a$ and $b$.

\begin{figure}
    \centering
    \includegraphics[page=3]{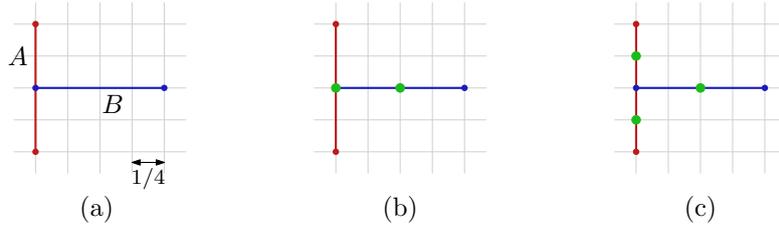}
    \caption{Two different minimal sets achieving minimal Hausdorff distance to $A$ and~$B$. Both the two green dots in Figure (b) and the three green dots in Figure (c) minimise the Hausdorff distance to \(A\) and \(B\).}
    \label{fig:minimal-sets}
\end{figure}

\begin{figure}
    \centering
    \includegraphics{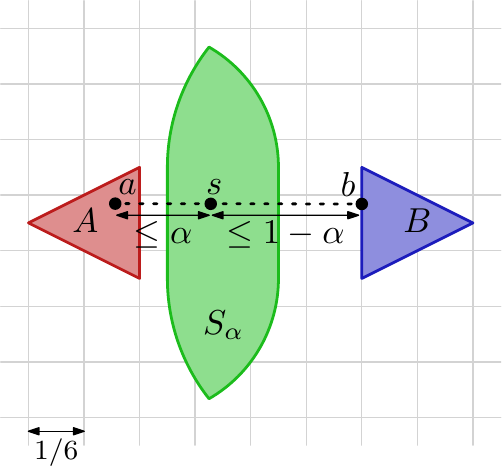}
    \caption{An arbitrary point \(a \in A\) with its closest point \(b\) on \(B\). The point \(s\) has distance at most \(\alpha\) to \(a\), and distance at most \(1 - \alpha\) to \(b\).}
    \label{fig:hausdorff-bound}
\end{figure}

\begin{theorem}
    Let \(A\) and \(B\) be two  compact sets in the plane with \(\dh(A, B) = 1\). Then \(\dh(A, S_\alpha) = \alpha\) and \(\dh(B, S_\alpha) = 1 - \alpha\).
\end{theorem}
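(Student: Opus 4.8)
The plan is to bound all four directed distances $\dhd(A,S_\alpha)$, $\dhd(S_\alpha,A)$, $\dhd(B,S_\alpha)$, $\dhd(S_\alpha,B)$ separately, and then recombine them via the two maxima defining $\dh(A,S_\alpha)$ and $\dh(B,S_\alpha)$. Throughout I use that $A$ and $B$ are nonempty and compact, so every relevant infimum and supremum is attained, and that $\dh(A,B)=1$ gives the containments $A\subseteq B\oplus D_1$ and $B\subseteq A\oplus D_1$.

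For the upper bounds, two of the four inequalities are immediate from the definition of $S_\alpha$ as an intersection: since $S_\alpha\subseteq A\oplus D_\alpha$, every $s\in S_\alpha$ has a point of $A$ within distance $\alpha$, so $\dhd(S_\alpha,A)\le\alpha$, and symmetrically $\dhd(S_\alpha,B)\le 1-\alpha$. For the remaining two I use the segment construction suggested by \cref{fig:hausdorff-bound}. Given $a\in A$, let $b\in B$ be a closest point, so that $d(a,b)=d(a,B)\le 1$, and place $s$ on $\seg(a,b)$ at distance $\min(\alpha,d(a,b))$ from $a$. A short check in the two cases $d(a,b)\le\alpha$ and $d(a,b)>\alpha$ shows $d(a,s)\le\alpha$ and $d(s,b)\le 1-\alpha$, so $s\in S_\alpha$ and hence $d(a,S_\alpha)\le\alpha$; taking the supremum over $a$ gives $\dhd(A,S_\alpha)\le\alpha$, and the mirror construction gives $\dhd(B,S_\alpha)\le 1-\alpha$. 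Together these yield $\dh(A,S_\alpha)\le\alpha$ and $\dh(B,S_\alpha)\le 1-\alpha$.

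For the lower bounds the key observation is that $\dh(A,B)=1$ forces at least one of $\dhd(A,B)$ and $\dhd(B,A)$ to equal $1$, and which one it is determines in which direction the bound $\alpha$ is realized. If $\dhd(A,B)=1$, compactness yields $a^*\in A$ with $d(a^*,B)=1$; since $S_\alpha\subseteq B\oplus D_{1-\alpha}$ every $s\in S_\alpha$ satisfies $d(s,B)\le 1-\alpha$, so the triangle inequality $d(a^*,B)\le d(a^*,s)+d(s,B)$ forces $d(a^*,s)\ge\alpha$, whence $\dhd(A,S_\alpha)\ge\alpha$. If instead $\dhd(B,A)=1$, I take $b^*\in B$ with $d(b^*,A)=1$ together with its closest point $a^*\in A$, and place $s$ on $\seg(a^*,b^*)$ at distance $\alpha$ from $a^*$; the same construction as above shows $s\in S_\alpha$, while applying the triangle inequality to an arbitrary $a'\in A$ (using $d(b^*,a')\ge 1$) gives $d(s,a')\ge\alpha$, so $d(s,A)\ge\alpha$ and thus $\dhd(S_\alpha,A)\ge\alpha$. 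Either way $\dh(A,S_\alpha)\ge\alpha$, and combined with the upper bound this gives $\dh(A,S_\alpha)=\alpha$. The equality $\dh(B,S_\alpha)=1-\alpha$ follows from the completely symmetric argument with the roles of $A,B$ and of $\alpha,1-\alpha$ exchanged.

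I expect the main obstacle to be the lower bound rather than the upper bound. The upper bounds are essentially definitional together with the one-dimensional computation along a segment, but proving \emph{equality} requires noticing that the extremal distance $\alpha$ may be attained either as $\dhd(A,S_\alpha)$ or as $\dhd(S_\alpha,A)$, according to which directed distance between $A$ and $B$ equals $1$. Getting this case split right—and checking that compactness really delivers the witness points $a^*$ and $b^*$—is the delicate part; everything else reduces to the triangle inequality and the containment defining $S_\alpha$.
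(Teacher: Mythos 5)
Your proof is correct and follows essentially the same route as the paper: the same four directed-distance bounds, the same segment construction for the upper bounds, and the same triangle-inequality argument from a witness point at distance $1$ for the lower bounds. The only difference is that you make explicit the case split on whether $\dhd(A,B)$ or $\dhd(B,A)$ attains the value $1$, which the paper compresses into a terse ``assume the Hausdorff distance is realized by a point $\hat{a}\in A$''---your version is slightly more careful on exactly the point you flag as delicate, but it is not a different argument.
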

\begin{proof}
    We first show that \(\dh(A, S_\alpha) \leq \alpha\).
    The proof for \(\dh(B, S_\alpha) \leq 1 - \alpha\) is analogous and therefore omitted.
    We will infer \(\dh(A, S_\alpha) \leq \alpha\) from \(\dhd(A, S_\alpha) \leq \alpha\) and \(\dhd(S_\alpha, A) \leq \alpha\); thereafter we will show equality.

    Consider any point \(a \in A\); by our assumption that \(\dh(A, B) = 1\), there is a point \(b \in B\) with \(d(a, b) \leq 1\); see \cref{fig:hausdorff-bound}. Now consider a point \(s \in \seg(a,b)\) with \(d(a, s) \leq \alpha\) and \(d(b, s) \leq 1 - \alpha\); clearly this point must be in \(S_\alpha\), as it is contained in both \(A\oplus D_\alpha\) and \(B\oplus D_{1-\alpha}\), and it has \(d(a, s) \leq \alpha\). As this works for every \(a \in A\), it holds that \(\dhd(A, S_\alpha) \leq \alpha\). The fact that \(\dhd(S_\alpha, A) \leq \alpha\) follows straightforwardly from \(S_\alpha\) being a subset of \(A\oplus D_{\alpha}\). Thus, \(\dh(A, S_\alpha) \leq \alpha\).

    To show equality, assume that the Hausdorff distance between \(A\) and \(B\) is realized by a point $\hat{a} \in A$ with closest point $\hat{b} \in B$, at distance~$1$.
    Consider the point \(\hat{s} \in \seg(\hat{a},\hat{b})\) with \(d(\hat{a}, \hat{s}) = \alpha\) and \(d(\hat{b}, \hat{s}) = 1 - \alpha\). As observed, $\hat{s}\in S_\alpha$. Since $\hat{s}$ is the closest point of $S_\alpha$ to $\hat{a}$, and $\hat{b}$ is the closest point of $B$ to $\hat{s}$, equality follows.
\end{proof}
\begin{lemma}\label{lem:maximal}
    \(S_\alpha\) is the maximal set that satisfies \(\dh(A, S_\alpha) = \alpha\) and \(\dh(B, S_\alpha) = 1 - \alpha\).
\end{lemma}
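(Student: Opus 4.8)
The plan is to prove maximality by showing that \emph{any} set \(T\) satisfying the two distance conditions must be contained in \(S_\alpha\). Since the preceding theorem already establishes that \(S_\alpha\) itself realizes \(\dh(A, S_\alpha) = \alpha\) and \(\dh(B, S_\alpha) = 1 - \alpha\), this containment is precisely what ``maximal'' requires: \(S_\alpha\) is a set achieving the prescribed distances, and it contains every other such set.

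The key tool is the containment characterization of the directed Hausdorff distance recalled in the introduction: for closed sets, \(\dhd(T, A) \leq r\) holds if and only if every point of \(T\) lies within distance \(r\) of \(A\), that is, \(T \subseteq A \oplus D_r\). I would first note that this equivalence is immediate from the definition \(\dhd(T, A) = \adjustlimits\sup_{t \in T} \inf_{a \in A} d(t, a)\), since the supremum is at most \(r\) exactly when each \(t \in T\) has a point of \(A\) within distance \(r\).

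Now take any \(T\) with \(\dh(A, T) = \alpha\) and \(\dh(B, T) = 1 - \alpha\). Because the undirected distance dominates each of its directed components, we have \(\dhd(T, A) \leq \dh(A, T) = \alpha\) and \(\dhd(T, B) \leq \dh(B, T) = 1 - \alpha\). Applying the characterization twice yields \(T \subseteq A \oplus D_\alpha\) and \(T \subseteq B \oplus D_{1 - \alpha}\), hence \(T \subseteq (A \oplus D_\alpha) \cap (B \oplus D_{1 - \alpha}) = S_\alpha\). Combining this with the theorem shows that \(S_\alpha\) satisfies the two equalities and contains every set that does, so it is the unique maximal such set.

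I do not anticipate a serious obstacle here; the proof is short and follows directly from the definitions. The only points requiring care are the direction of the directed distance (it is \(\dhd(T, A)\), the supremum over points of \(T\), that governs the containment \(T \subseteq A \oplus D_\alpha\), not \(\dhd(A, T)\)), and the reliance on the sets being closed, which is exactly the hypothesis under which the containment property was stated in the introduction.
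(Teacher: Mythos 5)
Your proof is correct and follows essentially the same route as the paper: both arguments reduce the two distance conditions to the containments \(T \subseteq A \oplus D_\alpha\) and \(T \subseteq B \oplus D_{1-\alpha}\) via the containment characterization of the directed Hausdorff distance, and conclude \(T \subseteq S_\alpha\). Your additional care about which directed distance governs the containment, and the explicit appeal to the preceding theorem for \(S_\alpha\) itself achieving the distances, only makes the argument slightly more complete than the paper's version.
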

\begin{proof}
    Consider any set \(T\) for which we have \(\dhd(T, A) \leq \alpha\) and \(\dhd(T, B) \leq 1 - \alpha\). As \(A\oplus D_\alpha\) contains all points with distance at most \(\alpha\) to \(A\), we have that \(T \subseteq A\oplus D_\alpha\); similarly, we have that \(T \subseteq B\oplus D_{1-\alpha}\). By the definition of \(S_\alpha\), this implies that \(T \subseteq S_\alpha\). As this holds for any \(T\), we conclude that \(S_\alpha\) is maximal.
\end{proof}

\subsection{Properties of \texorpdfstring{\(\boldsymbol{S_\alpha}\)}{S-alpha}}\label{sec:two-sets-properties}
\begin{figure}
    \centering
    \includegraphics[page=2]{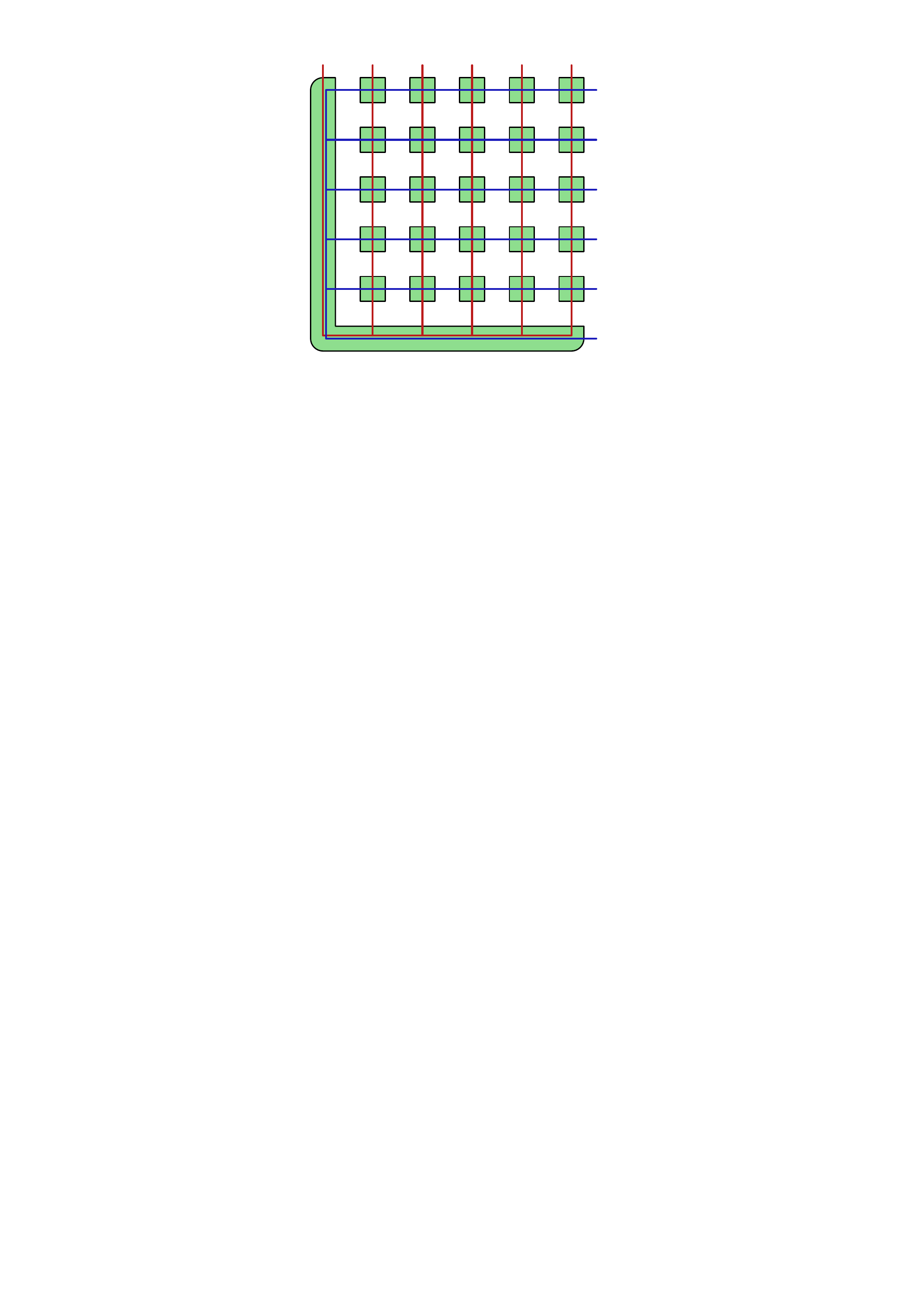}
    \caption{Sets $A$ and~$B$ for which $S_{1/2}$ is disconnected. The shaded areas around $A$ and~$B$ represent $A \oplus D_{1/2}$ and $B \oplus D_{1/2}$, respectively.}
    \label{fig:two-sets-quadratic-lower-bound}
\end{figure}

In this section, we study the convexity and connectedness of \(S_\alpha\). Recall that a set \(A \subseteq \R^2\) is convex if for any two points \(a, b \in A\), the segment \(\seg(a,b)\)
between them is completely contained in \(A\). Also, recall that a set $A\subset \R^2$ is connected if for any two points $a, b \in A$, there exists a continuous curve $c : [0,1] \rightarrow A$ such that $c(0) = a$ and $c(1) = b$. This type of connectedness is known as path-connectedness, but we use the term connected for simplicity. We observe the following properties:
\begin{enumerate}
    \item If \(A\) and \(B\) are convex, \(S_\alpha\) is convex;
    \item If \(A\) is convex and \(B\) is connected, \(S_\alpha\) is connected;
    \item For some connected sets \(A\) and \(B\), \(S_\alpha\) is disconnected.
\end{enumerate}
Property~1 is straightforward: the Minkowski sum of \(A\) and \(B\) with a disk is convex, and the intersection of convex objects is itself also convex. The example in \cref{fig:two-sets-quadratic-lower-bound} demonstrates Property~3; in fact, \emph{any} Hausdorff middle will be disconnected for those input sets.

The next lemma establishes Property 2.

\begin{lemma}\label{lem:one-non-convex-connected}
    Let \(A\) and \(B\) be two connected regions of the plane with Hausdorff distance $1$, and \(A\) convex. Then \(S_\alpha = (A \oplus D_\alpha) \cap (B \oplus D_{1 - \alpha})\) is connected for any \(\alpha \in [0, 1]\).
\end{lemma}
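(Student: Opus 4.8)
The plan is to reduce the statement to a direct path-construction argument. Write \(A' = A \oplus D_\alpha\) and \(B' = B \oplus D_{1-\alpha}\), so that \(S_\alpha = A' \cap B'\). Here \(A'\) is convex (a Minkowski sum of convex sets) and \(B'\) is connected, but the intersection of a convex set and a connected set need not be connected, so I cannot argue from these two facts alone: the hypothesis \(\dh(A,B) = 1\), equivalently \(B \subseteq A \oplus D_1\), has to enter the argument. My plan is therefore to take two arbitrary points \(s_1, s_2 \in S_\alpha\) and join them inside \(S_\alpha\) by routing through an auxiliary ``spine'' path that I build from a path in \(B\).

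First I would choose, for each \(i \in \{1,2\}\), a nearest point \(b_i \in B\) to \(s_i\); since \(s_i \in B'\) we have \(d(s_i, b_i) \leq 1-\alpha\). As \(B\) is connected and compact, there is a path \(\beta \colon [0,1] \to B\) with \(\beta(0) = b_1\) and \(\beta(1) = b_2\). For each \(t\), let \(p(t)\) be the nearest point of the convex set \(A\) to \(\beta(t)\); because \(B \subseteq A \oplus D_1\), the distance \(d(\beta(t), A)\) from \(\beta(t)\) to \(A\) is at most \(1\). I then define \(\sigma(t)\) to be the point on \(\seg(\beta(t), p(t))\) at distance \(\min(1-\alpha,\, d(\beta(t),A))\) from \(\beta(t)\). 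The key claim is that \(\sigma(t) \in S_\alpha\): it lies within distance \(1-\alpha\) of \(\beta(t) \in B\), so \(\sigma(t) \in B'\); and its distance to \(p(t) \in A\) equals \(d(\beta(t),A) - (1-\alpha) \leq \alpha\) when the truncation is active and \(0\) otherwise, so \(\sigma(t) \in A'\).

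Finally I would connect each \(s_i\) to \(\sigma\) at its endpoints. Both \(s_1\) and \(\sigma(0)\) lie in \(b_1 \oplus D_{1-\alpha}\), a convex disk contained in \(B'\); since they also lie in the convex set \(A'\), the segment \(\seg(s_1, \sigma(0))\) stays in \(A' \cap B' = S_\alpha\), and symmetrically for \(s_2\) and \(\sigma(1)\). Concatenating \(\seg(s_1,\sigma(0))\), the image of \(\sigma\), and \(\seg(\sigma(1), s_2)\) then yields a path in \(S_\alpha\) from \(s_1\) to \(s_2\), establishing connectedness. The main obstacle I anticipate is the \(\sigma\)-step: I must verify both that the truncated projection meets the two membership conditions simultaneously---this is exactly where \(d(\beta(t),A) \leq 1\) is used---and that \(\sigma\) is continuous, including at parameters where \(\beta(t)\) enters \(A\) or crosses the threshold \(d(\beta(t),A) = 1-\alpha\). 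At the threshold the two cases must be shown to agree (both give \(p(t)\)), so that the undefined projection direction for points of \(A\) causes no discontinuity.
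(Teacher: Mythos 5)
Your proposal is correct and follows essentially the same route as the paper: both push a path in \(B\) toward its (continuous, because \(A\) is convex) nearest-point projection onto \(A\) to build a spine inside \(S_\alpha\), and then attach the two given points to that spine by segments justified via convexity of \(A \oplus D_\alpha\) and of a disk around the nearest \(B\)-point. Your explicit truncated-distance construction and your attention to continuity at the degenerate cases are, if anything, slightly more carefully worked out than the paper's one-line convex-combination map.
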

\begin{proof}
    See Figure~\ref{fig:convex-implies-connected} for an illustration.
    Because \(A\) is convex, there is a continuous map $\rho\colon B\to A$ that maps each point of $B$ to a closest point (within distance $1$) in~$A$.
    For $b\in B$, let $\rho_\alpha(b)= \alpha \rho(b) + (1-\alpha)b$. We have that $\rho_\alpha\colon B\to S_\alpha$ is also continuous.

    Now take any two points \(s\) and \(s'\) in \(S_\alpha\); respectively, they have
        points $b$ and $b'\in B$ within distance $1-\alpha$.
    The segments between $s$ and $\rho_\alpha(b)$ and between $s'$ and $\rho_\alpha(b')$ lie completely in $S_\alpha$.
    Take a continuous curve \(\pi\) from \(b\) to \(b'\) inside \(B\). The image of $\pi$ under $\rho_\alpha$ connects $\rho_\alpha(b)$ to $\rho_\alpha(b')$ within $S_\alpha$, so $s$ and $s'$ are connected inside $S_\alpha$.
\end{proof}

\begin{figure}
    \begin{minipage}[t]{.47\textwidth}
        \centering
        \includegraphics{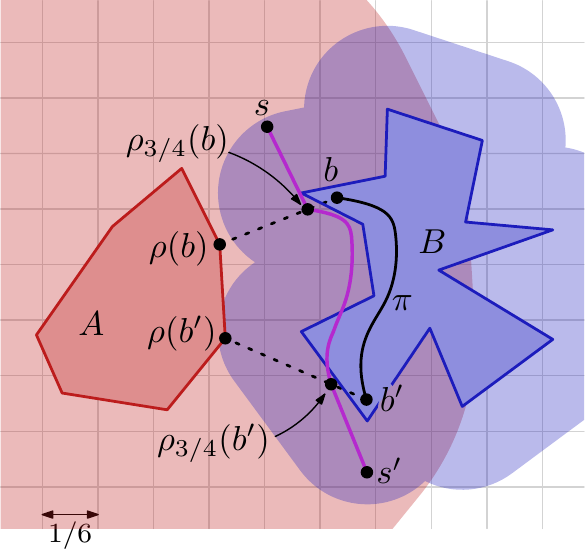}
        \caption{Illustration of the proof showing that $S_\alpha$ is connected if $A$ is convex (sketched for $\alpha = 3/4$). The shaded areas around $A$ and~$B$ represent $A \oplus D_{3/4}$ and $B \oplus D_{1/4}$, respectively, so that the doubly-shaded area is $S_{3/4}$.}
        \label{fig:convex-implies-connected}
    \end{minipage}%
    \hfill%
    \begin{minipage}[t]{.47\textwidth}
        \centering
        \includegraphics{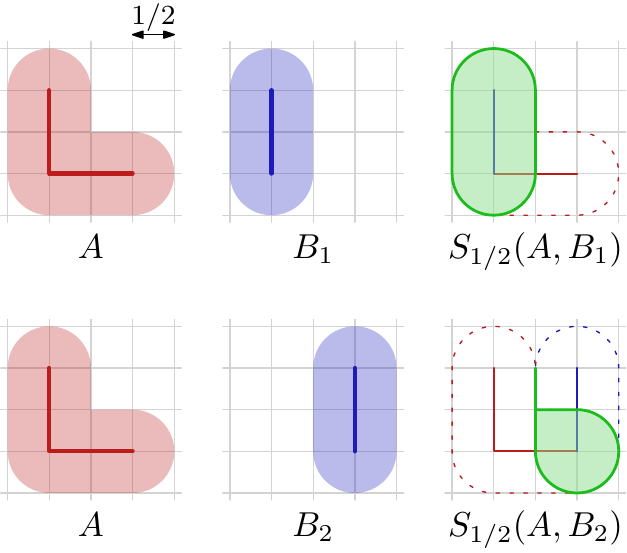}
        \caption{Although \(B_2\) is a translate of \(B_1\), the middle set between \(A\) and \(B_2\) is not a translate of the middle set between \(A\) and \(B_1\).}
        \label{fig:translate}
    \end{minipage}
\end{figure}

We note that $S_\alpha$ may contain holes. Furthermore, $S_\alpha$ is not shape invariant when $B$ is translated with respect to $A$. For example, let $A$ be the union of the left and bottom sides of a unit square and let $B_1$ and $B_2$ be the left and right sides of that same unit square.
Then $(A \oplus D_{1/2}) \cap (B_1 \oplus D_{1/2})$ is not a translate of
$(A \oplus D_{1/2}) \cap (B_2 \oplus D_{1/2})$. See \cref{fig:translate}; note that \(\dh(A, B_1) = \dh(A, B_2)\).

\subsection{Complexity of \texorpdfstring{\(\boldsymbol{S_\alpha}\)}{S-alpha}}
In this section, we describe the complexity of \(S_\alpha\) in terms of the number of vertices, line segments, and circular arcs on its boundary, for several types of polygonal input sets. Recall that \(\partial A\) denotes the boundary of set \(A\).

\begin{lemma}
    Let \(A\) be a convex polygon with $n$ vertices and \(B\) a simple polygon with $m$ vertices. Then \(\partial S_\alpha\) consists of \(O(n + m)\) vertices, line segments and circular arcs, and this bound is tight in the worst case.
\end{lemma}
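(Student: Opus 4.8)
The plan is to reduce the statement to bounding the complexity of an intersection of two offset regions. First I would write \(S_\alpha = K \cap R\) with \(K \coloneqq A \oplus D_\alpha\) and \(R \coloneqq B \oplus D_{1-\alpha}\). Since \(A\) is a convex polygon, \(K\) is convex and \(\partial K\) consists of \(n\) line segments (parallel to the edges of \(A\)) together with \(n\) circular arcs of radius \(\alpha\) (one per vertex of \(A\)), so \(\partial K\) has complexity \(O(n)\). The region \(R\) is the outward offset of a simple polygon: its boundary is a single closed curve whose segments are parallel to edges of \(B\), whose arcs of radius \(1-\alpha\) sit at the convex vertices of \(B\), and whose reflex vertices produce ordinary segment--segment corners. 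As a single level set of the distance function to a simple polygon, whose medial axis has complexity \(O(m)\), the curve \(\partial R\) has complexity \(O(m)\).

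Next I would observe that \(\partial S_\alpha\) is composed of the portions of \(\partial K\) lying inside \(R\) together with the portions of \(\partial R\) lying inside \(K\), subdivided at the points where \(\partial K\) and \(\partial R\) cross. Hence the total complexity is \(O(n + m + X)\), where \(X\) is the number of such crossings, and the whole lemma reduces to proving \(X = O(n+m)\).

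For the crossings I would first dispose of the easy case. Every crossing lies on some piece of \(\partial R\); if that piece is a line segment, then because \(K\) is convex the segment meets \(\partial K\) in at most two points, contributing \(O(m)\) crossings in total. The remaining crossings lie on the circular arcs of \(\partial R\), and here the naive per-piece bound fails: a single circle of radius \(1-\alpha\) can weave in and out of the convex curve \(\partial K\) up to \(\Theta(n)\) times, so charging crossings to arcs individually yields only \(O(nm)\). This is the main obstacle. To overcome it I would argue globally, exploiting that \(K\) is convex: parameterise \(\partial K\) by the direction of its outward normal, which increases monotonically through \(S^1\) exactly once as \(\partial K\) is traversed. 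Since the total turning of \(\partial K\) is \(2\pi\), the convex boundary cannot interleave with the features of \(\partial R\) more often than its own complexity and turning budget permit; making this precise through a charging argument that distributes each crossing either to a piece of \(\partial K\) (whose total turning is \(2\pi\)) or to a piece of \(\partial R\) should give \(X = O(n+m)\). I expect this amortised turning argument to be the delicate part of the proof.

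Finally I would establish tightness by exhibiting sets attaining \(\Omega(n+m)\). Taking \(A\) to be a regular \(n\)-gon and \(B\) a simple polygon that hugs part of \(\partial A\) closely enough that \(\Omega(n)\) of the radius-\(\alpha\) arcs of \(\partial K\) survive on \(\partial S_\alpha\), while elsewhere \(B\) forms a comb of \(\Omega(m)\) teeth lying within distance \(\alpha\) of \(A\) so that \(\Omega(m)\) features of \(\partial R\) survive, produces a boundary of complexity \(\Omega(n+m)\), matching the upper bound.
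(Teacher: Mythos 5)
Your setup is sound and you have correctly located the difficulty: segments of \(\partial R\) cross the convex curve \(\partial K\) at most twice, but a single circular arc of \(\partial R\) (of radius \(1-\alpha>\alpha\)) can cross \(\partial K\) up to \(\Theta(n)\) times, so per-feature charging only yields \(O(nm)\). (Note in passing that when \(\alpha\geq 1/2\) there is no difficulty at all: every arc of \(\partial R\) then has curvature at least that of every arc of \(\partial K\), so convexity of \(K\) forces at most two crossings per feature of \(\partial R\); the paper's proof splits into exactly these two cases.) However, your proposed resolution --- an amortised argument against the \(2\pi\) ``turning budget'' of \(\partial K\) --- is not carried out, and as stated it cannot work: the total turning of a convex curve is \(2\pi\) independently of \(n\), and an excursion of \(\partial K\) outside the supporting disk of an arc \(b\) of \(\partial R\) can consume an arbitrarily small amount of turning, so the number of excursions is not controlled by the turning at all. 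You acknowledge that making this precise is ``the delicate part of the proof''; it is in fact the entire content of the lemma, and the turning heuristic does not supply it.

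What is needed is a charging scheme with a \emph{uniqueness} argument, which is how the paper proceeds. For an arc \(b\) of \(\partial R\) with many crossings, each maximal excursion of \(\partial K\) outside the supporting disk of \(b\), between consecutive crossings \(i_1,i_2\), must contain a circular arc \(a\) of \(\partial K\) (the portion of a straight segment between two of its intersections with a circle lies \emph{inside} the disk). If the angular extent of the excursion seen from the center of \(b\) exceeds a constant \(\varepsilon\), the crossings are charged to \(b\) itself, at most \(360/\varepsilon\) times per \(b\); otherwise they are charged to \(a\). The crux is then to show that \(a\) is charged at most once \emph{over all arcs of \(\partial R\)}: convexity of \(K\) confines the excursion to the thin region bounded by \(b\) and the lines through the neighbouring crossing pairs, and any other arc \(b'\) of \(\partial R\) entering that region cannot contain \(i_1\) and \(i_2\) in its disk, hence meets \(\partial K\) at most twice and never generates such a charge. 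Without an argument of this kind --- showing each arc of \(\partial K\) absorbs only \(O(1)\) charges globally, not merely per arc of \(\partial R\) --- your proof has a genuine gap. Your lower-bound construction is fine, though more elaborate than necessary: the paper simply observes that \(S_0=A\) and \(S_1=B\).
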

\begin{proof}
    For brevity we let $\dilate{A}=A\oplus D_\alpha$
    and $\dilate{B}=B\oplus D_{1-\alpha}$.

    There is a trivial worst-case lower bound of \(\Omega(n + m)\) by taking \(\alpha = 0\) or \(\alpha = 1\), as \(S_0 = A\) and \(S_1 = B\).
    Note that if the boundaries of \(\dilate{A}\) and \(\dilate{B}\) would consist of only line segments, the upper bound is easy to show: \(\dilate{A}\) is convex, and its boundary can therefore intersect each segment of \(\partial \dilate{B}\) at most twice, making \(\partial S_\alpha\) consist of (parts of) segments from \(\partial\dilate{A}\) and \(\partial\dilate{B}\) and at most \(O(m)\) intersection points. The problem is that \(\partial\dilate{A}\) and \(\partial\dilate{B}\) also contain circular arcs, in which case \(\partial\dilate{A}\) may intersect an arc of \(\partial\dilate{B}\) \(\Omega(n)\) times.

    \begin{figure}
        \centering
        \includegraphics{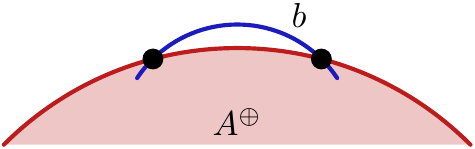}
        \caption{When \(\alpha \geq 1 - \alpha\), an arc \(b\) of \(\partial\dilate{B}\) (blue) can only intersect \(\partial\dilate{A}\) (red) twice.}
        \label{fig:two-intersections}
    \end{figure}

    To show an upper bound of \(O(n + m)\), we distinguish two cases. In the first case, we assume \(\alpha \geq 1 - \alpha\).
    Note that in this case, the circular arcs that are part of the boundary of \(\dilate{A}\) have a radius larger or equal to those of \(\dilate{B}\).
    Additionally, $\partial\dilate{A}$ is smooth and is an alternating sequence of circular arcs and segments, as $A$ is convex.
    In this case, we do in fact have that any line segment or circular arc \(b\) of \(\partial\dilate{B}\) can intersect \(\partial\dilate{A}\) at most twice. Consider two intersection points of \(b\) with \(\partial\dilate{A}\): as the curvature of \(\partial\dilate{A}\) is at most that of~\(b\), there can never be another intersection point between these two, or we would violate the convexity of \(\dilate{A}\). See \cref{fig:two-intersections} for an illustration of this case.

    For the second case, we assume \(\alpha < 1 - \alpha\). We charge all the intersections to the arcs and edges of \(\partial\dilate{A}\) and \(\partial\dilate{B}\). Each edge of \(\partial\dilate{B}\) can  intersect \(\partial\dilate{A}\) at most twice, as \(\dilate{A}\) is convex, so there can be at most \(O(m)\) such intersections. Similarly, for arcs of \(\partial\dilate{B}\) that intersect \(\partial\dilate{A}\) at most three times, there can be at most \(O(m)\) intersections in total. It remains to consider the arcs of \(\partial\dilate{B}\) that intersect \(\partial\dilate{A}\) more than three times.

    \begin{figure}
        \centering
        \includegraphics{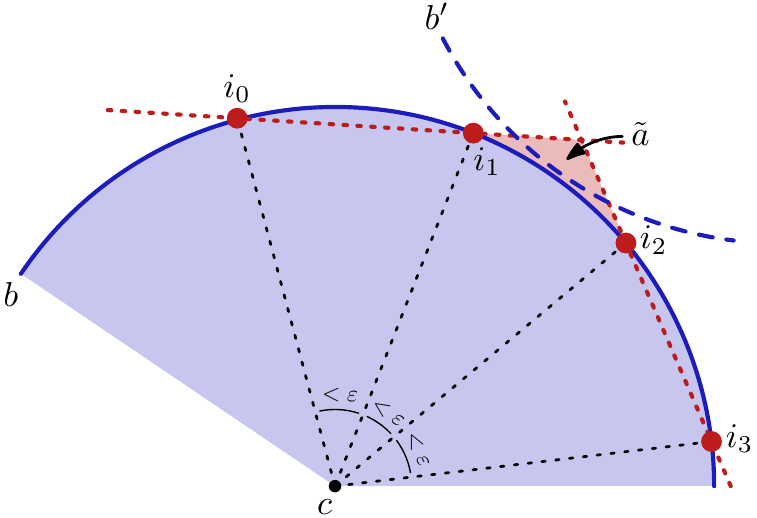}
        \caption{When \(\alpha < 1 - \alpha\), a single arc \(b\) of \(\partial\dilate{B}\), shown in blue, can have many intersections with \(\partial\dilate{A}\), but no other arc \(b'\), shown as a dashed blue arc, can have many intersections with the same part of \(\partial\dilate{A}\). The intersections of \(b\) with \(\partial\dilate{A}\) are shown in red.}
        \label{fig:intersection-bound}
    \end{figure}

    Let \(b\) be such an arc of \(\partial\dilate{B}\). Consider any quadruple of consecutive intersection points \(i_0, i_1, i_2, i_3\) with \(\partial\dilate{A}\) along \(b\), see \cref{fig:intersection-bound}, where the part of $\partial\dilate{A}$ between $i_1$ and $i_2$ that does not contain $i_0$ and $i_3$ is outside the disk supporting $b$. This part is denoted $\tilde{a}$; note that $\tilde{a}$ must contain at least one circular arc, denoted $a$.
    Notice that we consider all intersection points between $\partial \dilate{A}$ and $b$, except possibly for the first one or two and last one or two. These first and last ones can be charged to $b$, and this charge is at most four per arc $b$.
    Let $c$ be the center of the supporting disk of $b$. If any of the angles $\measuredangle i_0ci_1$,
    $\measuredangle i_1ci_2$, or
    $\measuredangle i_2ci_3$, is larger than $\varepsilon$ for some constant $\varepsilon>0$, we again charge the intersection points $i_1$ and $i_2$ to $b$, and we have less than $360/\varepsilon$ of such charges.
    So we now assume that all three angles are at most $\varepsilon$.
    We charge the intersection points $i_1$ and $i_2$ to $a$, the arc of a disk that appears on $\tilde{a}$.

    It remains to show that $a$ is charged at most once.
    We can limit the distance by which $\tilde{a}$ can protrude outside of \(b\): as \(\dilate{A}\) is convex, $\tilde{a}$ cannot cross the line through \(i_0\) and \(i_1\), nor the line through \(i_2\) and \(i_3\). This restricts  $\tilde{a}$ to the shaded area in \cref{fig:intersection-bound}.
    It is possible that $\tilde{a}$ intersects a different arc \(b'\) of \(\partial\dilate{B}\) in this shaded area. We observe that the disk that $b'$ is a part of cannot contain the intersection points $i_1$ and $i_2$, as otherwise those points would not be intersections of \(\partial\dilate{A}\) and \(\partial\dilate{B}\). Now \(b'\) can intersect \(\partial\dilate{A}\) at most twice, as more intersections would violate the convexity of \(\dilate{A}\). In particular, $b'$ cannot intersect \(\partial\dilate{A}\) four times,
    and hence $b'$ cannot charge intersections on it to $a$. We conclude that $a$ is charged only once.
    From this
    we conclude that there are at most \(O(n + m)\) intersection points in total, and that \(\partial S_\alpha\) therefore consists of at most \(O(n + m)\) vertices, line segments and circular arcs.
\end{proof}
\begin{lemma}
\label{lem:complexity}
    Let \(A\) and \(B\) be two simple polygons of \(n\) and \(m\) vertices, respectively. Then \(\partial S_\alpha\) consists of \(O(nm)\) vertices, line segments and circular arcs, and this bound is tight in the worst case.
\end{lemma}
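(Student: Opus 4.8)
The plan is to treat the upper and lower bounds separately, with the upper bound being a routine arrangement argument and the lower bound requiring an explicit construction. Write $\dilate{A}=A\oplus D_\alpha$ and $\dilate{B}=B\oplus D_{1-\alpha}$. Since $A$ and $B$ are simple polygons, $\partial\dilate{A}$ is an alternating sequence of $O(n)$ straight segments (edges of $A$ pushed outward by $\alpha$) and circular arcs of radius $\alpha$ (centred at the convex vertices of $A$), and likewise $\partial\dilate{B}$ consists of $O(m)$ segments and arcs of radius $1-\alpha$. Every point of $\partial S_\alpha$ lies on $\partial\dilate{A}\cup\partial\dilate{B}$, and the breakpoints of $\partial S_\alpha$ are either vertices of $\partial\dilate{A}$ or $\partial\dilate{B}$ (of which there are $O(n+m)$), or transversal intersection points of $\partial\dilate{A}$ with $\partial\dilate{B}$. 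Any single feature of $\partial\dilate{A}$ meets any single feature of $\partial\dilate{B}$ in $O(1)$ points (two arcs, or an arc and a segment, meet at most twice; two segments at most once), so there are at most $O(1)\cdot O(n)\cdot O(m)=O(nm)$ such intersections. Counting maximal pieces between consecutive breakpoints then shows that $\partial S_\alpha$ consists of $O(nm)$ vertices, segments and arcs.

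For the matching lower bound I will exhibit two simple polygons whose dilations cross $\Omega(nm)$ times, using a grid-of-cells (double comb) construction. Let $A$ be a comb with a base far to the left and $\Theta(n)$ thin horizontal teeth pointing right into a central region $R$, and let $B$ be a comb with a base far below and $\Theta(m)$ thin vertical teeth pointing up into $R$. I take the teeth thin and space them by more than $2\alpha$ (for $A$) and more than $2(1-\alpha)$ (for $B$), so that after dilation the teeth stay pairwise separated inside $R$; there $\dilate{A}$ restricts to a family of $\Theta(n)$ horizontal strips $H_i$ with straight horizontal sides, and $\dilate{B}$ to a family of $\Theta(m)$ vertical strips $V_j$. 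Placing the two bases far apart and away from the opposite comb's teeth ensures that $\dilate{A}\cap\dilate{B}$ is contained in $R$, so that $S_\alpha=\bigcup_{i,j}(H_i\cap V_j)$. (The asymptotic complexity is invariant under a global scaling of the configuration, so I do not insist on the normalisation $\dh(A,B)=1$ here; it plays no role in the counting argument and the phrasing of the complexity lemmas is in any case independent of it.)

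It remains to verify the count. Each nonempty cell $H_i\cap V_j$ is an axis-parallel rectangle whose four corners are crossings of a horizontal side of $H_i$ with a vertical side of $V_j$. There are $\Theta(n)\cdot\Theta(m)=\Theta(nm)$ such cells, and consecutive strips are separated by gaps (points lying outside $\dilate{B}$ between $V_j$ and $V_{j+1}$, and outside $\dilate{A}$ between $H_i$ and $H_{i+1}$), so the cells are pairwise disjoint and every corner is a genuine transversal crossing on $\partial S_\alpha$. Hence $\partial S_\alpha$ has $\Omega(nm)$ vertices, and combined with the upper bound this gives the tight $\Theta(nm)$ bound.

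The main obstacle is the bookkeeping of the construction rather than any single deep idea: one must check that the spacing condition really keeps the dilated teeth separated inside $R$, that the bases and the rounded tooth tips introduce no extra overlaps that merge cells or create intersections outside $R$, and that each of the $\Theta(nm)$ rectangle corners survives as a vertex of the intersection rather than being removed. I expect that choosing $R$, the tooth lengths, and the base positions with enough slack makes all of these verifications straightforward, so that the quadratic lower bound holds for every fixed $\alpha\in(0,1)$ (the boundary cases $\alpha\in\{0,1\}$ being trivial, as then $S_\alpha$ equals $A$ or $B$).
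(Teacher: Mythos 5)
Your proof is correct and follows essentially the same route as the paper's: the upper bound comes from the $O(n)$ and $O(m)$ complexities of the two offset boundaries together with the $O(1)$ bound on intersections per pair of features, and the lower bound from two interleaved (rotated) combs, which is exactly the paper's example (there phrased as $S_{1/2}$ having $\Omega(nm)$ connected components rather than counting rectangle corners). Your write-up is simply more explicit about the construction and about reconciling the normalisation $\dh(A,B)=1$ with the spacing of the teeth.
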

\begin{proof}
    The worst-case lower bound of \(\Omega(nm)\) follows by taking \(A\) and \(B\) to be two rotated ``combs''; see \cref{fig:two-sets-quadratic-lower-bound}. For \(\alpha = 1/2\), \(S_\alpha\) consists of \(\Omega(nm)\) distinct components.
    The upper bound follows directly from the fact that \(A\oplus D_{\alpha}\) and \(B\oplus D_{1 - \alpha}\) have complexities $O(n)$ and $O(m)$, respectively. Each individual arc and edge on the boundaries of \(A\oplus D_{\alpha}\) and \(B\oplus D_{1 - \alpha}\) intersect at most a constant number of times, so we cannot have more than \(O(nm)\) intersection points.
\end{proof}

In fact, not just \(S_\alpha\), but \emph{any} Hausdorff middle has complexity $\Theta(nm)$ for the example in Figure~\ref{fig:two-sets-quadratic-lower-bound}. $S_\alpha$ is maximal, so the components cannot be connected without changing the Hausdorff distance to \(A\) or \(B\), and other middles must have at least some point in every component of $S_\alpha$ to achieve Hausdorff distance \(1/2\) to both \(A\) and \(B\).

\subsection{\texorpdfstring{\(\boldsymbol{S_\alpha}\)}{S-alpha} as a morph}
\label{sec:morphs}

\begin{figure}[p]
    \centering
    \includegraphics[page=1]{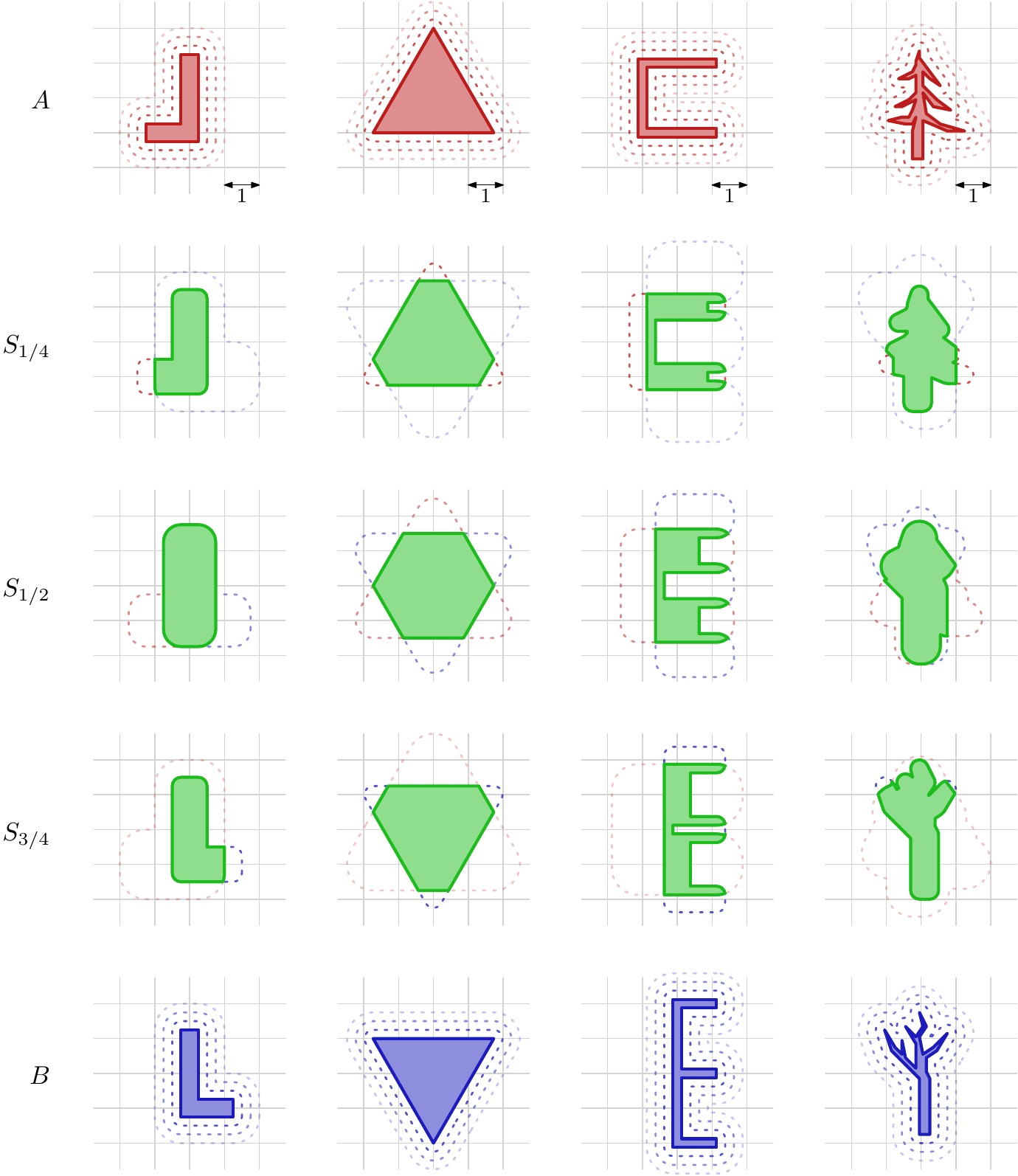}
    \caption{Some examples of morphs $S_\alpha$ between two shapes $A$ and~$B$.}
    \label{fig:morphs}
\end{figure}

By increasing $\alpha$ from $0$ to $1$, $S_\alpha$ morphs from $A=S_0$ into $B=S_1$. (Examples of such morphs are presented in \cref{fig:morphs-intro,fig:morphs}.) The following lemma shows that this morph has a bounded rate of change.

\begin{lemma}\label{lem:morph}
    Let \(S_\alpha\) and \(S_\beta\) be two intermediate shapes of \(A\) and \(B\) with \(d_H(A, B) = 1\) and \(\alpha \leq \beta\). Then \(\dh(S_\alpha, S_\beta) = \beta - \alpha\).
\end{lemma}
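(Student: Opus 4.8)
The plan is to establish the two inequalities $\dh(S_\alpha,S_\beta)\le\beta-\alpha$ and $\dh(S_\alpha,S_\beta)\ge\beta-\alpha$ separately. For the upper bound I would bound each directed distance $\dhd(S_\alpha,S_\beta)$ and $\dhd(S_\beta,S_\alpha)$ by $\beta-\alpha$ through the same segment-sliding idea used in the proof of the theorem. Take any $s\in S_\alpha$. Since $s\in(A\oplus D_\alpha)\cap(B\oplus D_{1-\alpha})$, there exist $a\in A$ with $d(a,s)\le\alpha$ and $b\in B$ with $d(b,s)\le 1-\alpha$. Let $t$ be the point of $\seg(s,b)$ at distance $\min(\beta-\alpha,\,d(s,b))$ from $s$, i.e.\ slide $s$ towards $b$ by $\beta-\alpha$, or all the way to $b$ if it is nearer. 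A triangle-inequality check gives $d(t,a)\le\alpha+(\beta-\alpha)=\beta$ and $d(t,b)\le(1-\alpha)-(\beta-\alpha)=1-\beta$ (the second because sliding towards $b$ reduces the distance to $b$), so $t\in S_\beta$ and $d(s,t)\le\beta-\alpha$; hence $\dhd(S_\alpha,S_\beta)\le\beta-\alpha$.

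The symmetric direction is handled the same way but sliding towards $A$ instead of $B$. For $t\in S_\beta$ I would pick $a\in A$ with $d(a,t)\le\beta$ and $b\in B$ with $d(b,t)\le1-\beta$, and let $s$ be the point of $\seg(t,a)$ at distance $\min(\beta-\alpha,\,d(t,a))$ from $t$. An analogous check gives $d(s,a)\le\alpha$ and $d(s,b)\le(\beta-\alpha)+(1-\beta)=1-\alpha$, so $s\in S_\alpha$ and $d(t,s)\le\beta-\alpha$, giving $\dhd(S_\beta,S_\alpha)\le\beta-\alpha$. Taking the maximum of the two directed distances yields $\dh(S_\alpha,S_\beta)\le\beta-\alpha$.

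For equality I would use the pair realizing $\dh(A,B)=1$. Assume without loss of generality that $\hat a\in A$ has its nearest point $\hat b\in B$ at distance $1$, and let $\hat s_\alpha\in\seg(\hat a,\hat b)$ satisfy $d(\hat a,\hat s_\alpha)=\alpha$, so that $\hat s_\alpha\in S_\alpha$ as in the theorem. The key observation is that $\hat b$ remains a nearest point of $B$ to $\hat s_\alpha$: any $b'\in B$ with $d(b',\hat s_\alpha)<1-\alpha$ would give $d(b',\hat a)\le d(b',\hat s_\alpha)+d(\hat s_\alpha,\hat a)<(1-\alpha)+\alpha=1$, contradicting the choice of $\hat b$; hence $d(\hat s_\alpha,B)=1-\alpha$. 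Now for any $t\in S_\beta\subseteq B\oplus D_{1-\beta}$ there is $b'\in B$ with $d(t,b')\le1-\beta$, and therefore $d(\hat s_\alpha,t)\ge d(\hat s_\alpha,b')-d(b',t)\ge(1-\alpha)-(1-\beta)=\beta-\alpha$. Thus $d(\hat s_\alpha,S_\beta)\ge\beta-\alpha$, so $\dhd(S_\alpha,S_\beta)\ge\beta-\alpha$ and the claimed equality follows. The main obstacle is this last step: the upper bounds are routine triangle-inequality manipulations, but the lower bound hinges on certifying that sliding $\hat a$ towards $\hat b$ keeps $\hat b$ as the witness in $B$, which is what pins the distance to $S_\beta$ down to exactly $\beta-\alpha$ rather than something smaller.
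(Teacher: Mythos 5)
Your proof is correct. The upper bound is essentially the paper's argument (slide a point of one intermediate shape along the segment toward its witness in $A$ or $B$ by $\beta-\alpha$); your explicit $\min(\beta-\alpha,\cdot)$ clamping is in fact slightly more careful than the paper, which glosses over the case where the point is closer than $\beta-\alpha$ to its witness. Where you genuinely diverge is the lower bound. The paper gets $\dh(S_\alpha,S_\beta)\geq\beta-\alpha$ in one line from the metric triangle inequality, $1=\dh(A,B)\leq\dh(A,S_\alpha)+\dh(S_\alpha,S_\beta)+\dh(S_\beta,B)\leq\alpha+\dh(S_\alpha,S_\beta)+(1-\beta)$, reusing the already-proved bounds on $\dh(A,S_\alpha)$ and $\dh(B,S_\beta)$. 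You instead exhibit an explicit witness $\hat s_\alpha\in S_\alpha$ whose distance to all of $S_\beta$ is at least $\beta-\alpha$, via the observation that $\hat b$ stays a nearest point of $B$ to $\hat s_\alpha$. Your route is more self-contained and even pins down a concrete point realizing the directed distance, but it costs you the existence of a realizing pair (compactness) and a ``without loss of generality'' that is not literally a symmetry: $S_\alpha$ treats $A$ and $B$ asymmetrically, so if the Hausdorff distance is realized only by some $\hat b\in B$ with nearest $\hat a\in A$, you need the mirrored argument producing a witness in $S_\beta$ that is far from $S_\alpha$ (it goes through, but say so). The paper's triangle-inequality argument avoids both issues and is the cleaner choice once the main theorem is in hand.
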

\begin{proof}
    We have $\dh(S_\alpha,S_\beta)\geq\beta-\alpha$ because, by the triangle inequality, $\dh(A,B)=1\leq\dh(A,S_\alpha)+\dh(S_\alpha,S_\beta)+\dh(S_\beta,B)\leq\alpha+\dh(S_\alpha,S_\beta)+1-\beta$.

    It remains to show that $\dh(S_\alpha,S_\beta)\leq\beta-\alpha$.
    We show that \(S_\beta \subseteq S_\alpha \oplus D_{\beta - \alpha}\); the proof that \(S_\alpha \subseteq S_\beta + D_{\beta-\alpha}\) is analogous. Let \(p\) be some point in \(S_\beta\). Then, by definition of \(S_\beta\), there exist some points \(a \in A\) and \(b \in B\) such that \(d(a, p) \leq \beta\) and \(d(b, p) \leq 1 - \beta\). Let \(\bar{p}\) be the point obtained by moving \(p\) in the direction of \(a\) by \(\beta - \alpha\). By the triangle inequality, we then have that \(d(a, \bar{p}) \leq \beta - (\beta - \alpha) = \alpha\) and \(d(b, \bar{p}) \leq (1 - \beta) + (\beta - \alpha) = 1 - \alpha\). This implies that \(\bar{p} \in S_\alpha\). As \(p\) was an arbitrary point in \(S_\beta\), and \(d(p, \bar{p}) \leq \beta - \alpha\), we have that \(S_\beta \subseteq S_\alpha \oplus D_{\beta - \alpha}\).
    So $\dh(S_\alpha,S_\beta)\leq\beta-\alpha$.
\end{proof}

The lemma implies that, even though the number of connected components of $S_\alpha$ can change when $\alpha$ changes, new components arise by splitting and never `out of nothing', and the number of components can only decrease through merging and not by disappearance.

The morph \(\langle S_\alpha\, |\, \alpha \in [0, 1]\rangle\) from $A$ to $B$ has a consistent submorph property, formalized below.

\begin{lemma}
If a morph from $A=S_0$ to $B=S_1$ contains a shape $C$, then the morph from $A$ to $C$ concatenated with the morph from $C$ to $B$ is the same as the morph from $A$ to $B$: they contain the same collection of shapes in between and in the same order.
\end{lemma}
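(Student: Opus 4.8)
The plan is to observe that the statement reduces to showing that the three morphs consist of exactly the same family of shapes, merely reparametrised and glued at $C$. Since $C$ lies on the morph from $A$ to $B$, I would write $C = S_\gamma(A,B)$ for some $\gamma \in [0,1]$; by the first theorem, $\dh(A,C)=\gamma$ and $\dh(C,B)=1-\gamma$, so the two morphs from $A$ to $C$ and from $C$ to $B$ are well-defined. After rescaling the dilation radii to the actual Hausdorff distance, the shape at fraction $t$ of the morph from $A$ to $C$ is $(A\oplus D_{t\gamma})\cap(C\oplus D_{(1-t)\gamma})$. Substituting $\alpha=t\gamma$, it suffices to prove the key identity
\[
    (A\oplus D_\alpha)\cap(C\oplus D_{\gamma-\alpha}) = S_\alpha(A,B)\qquad(0\le\alpha\le\gamma),
\]
together with its mirror image $(C\oplus D_{\alpha-\gamma})\cap(B\oplus D_{1-\alpha})=S_\alpha(A,B)$ for $\gamma\le\alpha\le 1$.

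For the inclusion $\subseteq$ of the first identity I would use sub-distributivity of the Minkowski sum over intersection: from $C=(A\oplus D_\gamma)\cap(B\oplus D_{1-\gamma})$ we get $C\oplus D_{\gamma-\alpha}\subseteq(A\oplus D_{2\gamma-\alpha})\cap(B\oplus D_{1-\alpha})$. Intersecting with $A\oplus D_\alpha$ and using $\alpha\le 2\gamma-\alpha$ (which collapses the two $A$-terms to $A\oplus D_\alpha$) yields containment in $(A\oplus D_\alpha)\cap(B\oplus D_{1-\alpha})=S_\alpha(A,B)$. For the reverse inclusion $\supseteq$, the crucial tool is \cref{lem:morph}: since $\alpha\le\gamma$ it gives $\dh(S_\alpha,S_\gamma)=\gamma-\alpha$, hence $S_\alpha\subseteq S_\gamma\oplus D_{\gamma-\alpha}=C\oplus D_{\gamma-\alpha}$; combined with $S_\alpha\subseteq A\oplus D_\alpha$ (immediate from the definition), this gives $S_\alpha(A,B)\subseteq(A\oplus D_\alpha)\cap(C\oplus D_{\gamma-\alpha})$. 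The second identity follows by the same argument with the roles of $A$ and $B$, and of $\gamma$ and $1-\gamma$, exchanged.

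Finally I would assemble the pieces: the reparametrisation $\alpha=t\gamma$ is a monotone bijection from $[0,1]$ onto $[0,\gamma]$, so the morph from $A$ to $C$ realises exactly $\{S_\alpha(A,B)\mid\alpha\in[0,\gamma]\}$ in increasing order of $\alpha$; symmetrically the morph from $C$ to $B$ realises $\{S_\alpha(A,B)\mid\alpha\in[\gamma,1]\}$ in increasing order, and the two agree at the junction $\alpha=\gamma$, where each equals $C$. Concatenating therefore reproduces $\{S_\alpha(A,B)\mid\alpha\in[0,1]\}$ in the same order as the original morph. I expect the main obstacle to be the reverse inclusion $\supseteq$ of the key identity, namely that every point of $S_\alpha$ already lies within distance $\gamma-\alpha$ of $C$; the clean way to avoid a direct point-chasing argument is to invoke \cref{lem:morph} as a black box, which makes this step immediate. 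The degenerate cases $\gamma\in\{0,1\}$, where $C=A$ or $C=B$ and one of the two submorphs is trivial, can be noted separately.
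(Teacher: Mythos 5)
Your proposal is correct and follows essentially the same route as the paper's own proof: both directions hinge on the same two facts, namely \cref{lem:morph} giving $S_\alpha\subseteq C\oplus D_{\gamma-\alpha}$ for one inclusion, and the containment $C\subseteq B\oplus D_{1-\gamma}$ (applied via the triangle inequality, which you phrase equivalently as sub-distributivity of the Minkowski sum over intersection) for the other. The only differences are cosmetic — you argue at the level of sets rather than points, and you are somewhat more explicit than the paper about rescaling the parameter of the submorphs — so there is nothing to change.
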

\begin{proof}
    Let \(\alpha\) be the value such that \(S_\alpha(A, B) = C\). We define \(S'_\beta \coloneqq (A \oplus D_\beta) \cap (C \oplus D_{\alpha - \beta})\) for  \(\beta \in [0, \alpha]\), giving the morph from \(A\) to \(C\). We need to show that \(S_\beta(A, B) = S'_\beta(A, C)\). The case for the morph from \(C\) to \(B\) is analogous and therefore omitted.

    Let \(x\) be any point in \(S_\beta(A, B)\). By definition it has a distance of at most \(\beta\) to \(A\), and \cref{lem:morph} establishes that it has distance at most \(\alpha - \beta\) to \(C\). This implies that \(x \in S'_\beta(A, C)\). As this works for any point \(x\), we have that \(S_\beta(A, B) \subseteq S'_\beta(A, C)\). Now let \(x'\) be any point in \(S'_\beta(A, C)\). By definition it has distance at most \(\beta\) to \(A\), and distance at most \(\alpha - \beta\) to some point \(c \in C\). As \(\dh(B, C) = 1 - \alpha\), by the triangle inequality \(x'\) must have distance at most \((\alpha - \beta) + (1 - \alpha) = 1 - \beta\) to some point in \(B\). This shows \(x' \in S_\beta(A, B)\). As this works for any point \(x'\), we also have that \(S'_\beta(A, C) \subseteq S_\beta(A, B)\). We conclude that \(S_\beta(A, B) = S'_\beta(A, C)\).
\end{proof}

As a corollary of this lemma, $\{\alpha \in [0,1] \mid S_\alpha \textrm{ is convex}\}$ is a connected interval.

\subsection{The cost of connectedness}
For some applications, it might be necessary to insist that the middle shape is always connected. However, in the worst case, the cost of connecting all components of \(S_\alpha\) can be that the Hausdorff distance of the resulting shape to \(A\) and \(B\) becomes~$1$. See \cref{fig:cost-of-connectedness} for an example where this is the case. In fact, any connected shape has distance at least 1 for this example.

\begin{figure}
    \centering
    \includegraphics[page=2]{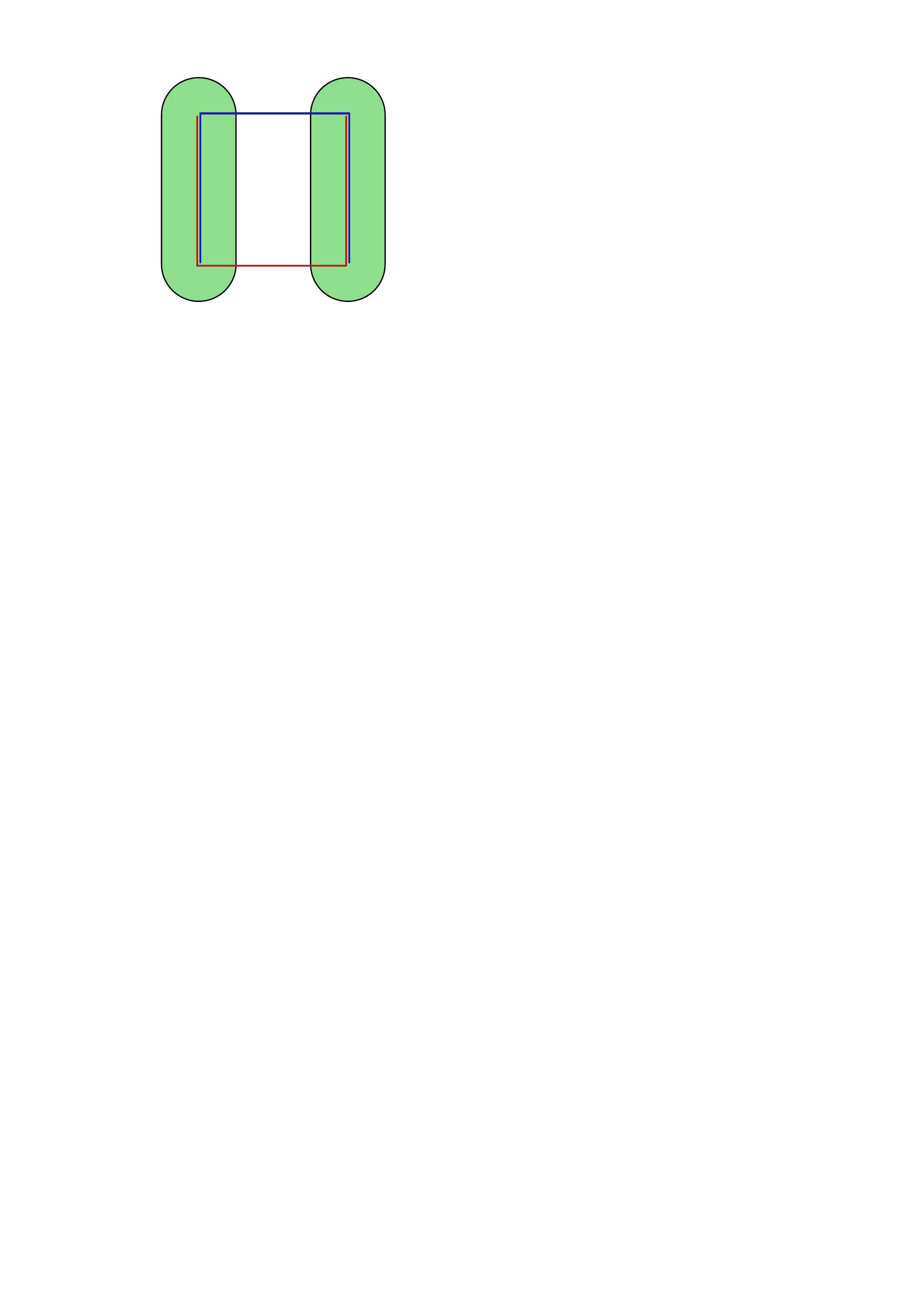}
    \caption{Figures (a) and (b) show the offsets of $A$, respectively $B$ with distance $1/2$. Figure (c) shows the resulting $S_{1/2}$ in green. Any
    connected shape must cross the vertical middle line or stay
    on one side of it. In both cases, the Hausdorff distance doubles.}
    \label{fig:cost-of-connectedness}
\end{figure}

\section{The Hausdorff middle of more than two sets}
\label{sec:3sets}

\begin{figure}
    \centering
     \includegraphics{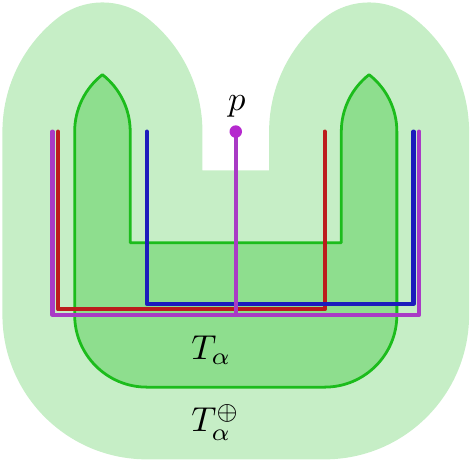}
    \caption{The pairwise Hausdorff distance in this construction is \(1\), and for any \(\alpha < 1\), \(\dilate{T_\alpha}\) does not contain point \(p\).}
    \label{fig:three-sets-connected-lower-bound}
\end{figure}

A natural question is whether the results from the previous section extend to more than two input shapes. There are several ways to formalise the notion of a Hausdorff middle between multiple shapes. Analogous to the case of two sets, we are interested in a middle shape that minimizes the maximum Hausdorff distance to each input set.
Let \(\coll=\{A_1, \ldots, A_m\}\) be a collection of \(m\) input shapes with largest pairwise Hausdorff distance~$1$.
We define \(T_\alpha\) as \(\bigcap_i (A_i \oplus D_\alpha)\); the (maximal) middle set is then given by the smallest value~\(\alpha\) for which $T_\alpha \oplus D_\alpha$~contains all input sets.
We denote this smallest \(\alpha\) by $\alpha(\coll):=\min\{\,\alpha \mid \max_i \dh(A_i,T_\alpha) \leq \alpha \,\}$.
If~$\alpha$~is clear from the context,
we use the notation $\dilate{A}$ to mean $A\oplus D_{\alpha}$.

In this section, we first study the largest possible $\alpha(\coll)$ for general and convex input. We then study some general properties of $T_\alpha$ with respect to connectivity and convexity.
After this, we consider whether there is some subset of \(\coll\) that requires the same value of \(\alpha\), and obtain a Helly-type property for convex input.
Finally, we will give various algorithms to compute or approximate $\alpha(\coll)$ efficiently.

\subsection{The largest \texorpdfstring{$\alpha(\coll)$}{a(M)}}
In this section, we are interested in the largest
possible value of $\alpha(\coll)$.
We first discuss the general case and then
study the case where all sets $A\in \coll$ are
convex. In both cases, we provide an exact answer.
This section relies on some tedious calculations, which turn out to be easier if we do not normalize pairwise distances of our objects to 1.

As it turns out, for some inputs it may be the case that \(\alpha(\coll) = 1\); see \cref{fig:three-sets-connected-lower-bound}. Here, there can be no shape with Hausdorff distance less than 1 to all the input shapes, meaning any of the three input shapes can be chosen as ``the middle''.
Hence, for two input sets, we always have $\alpha(\coll)=1/2$, but for more input sets, the value depends on the input, and $\alpha(\coll)$ will be in $[1/2,\,1]$.
The example in \cref{fig:three-sets-connected-lower-bound} requires non-convex sets, raising the question of what the range of $\alpha(\coll)$ can be when all $A_i$ are convex.

If we have three convex sets that are points, and they form the corners of an equilateral unit-side triangle, then we can easily see that $\alpha(\coll)= 1/\sqrt{3}\approx 0.577$ and the middle shape is exactly the point in the middle of the triangle.

An example with three line segments shown
in \cref{fig:three-convex-lower-bound} surprisingly
achieves (for $\lambda\approx0.253135$, $\theta\approx123.37^\circ$) a larger value $\alpha^* \approx 0.6068=r$, which we call the
\emph{\magicValue}.
Lemma~\ref{lem:bound3convex} shows that no three convex sets achieve $\alpha(\coll)>\alpha^*$. Thus the \magicValue is a tight upper bound for three convex sets.

\begin{figure}
    \centering
    \includegraphics{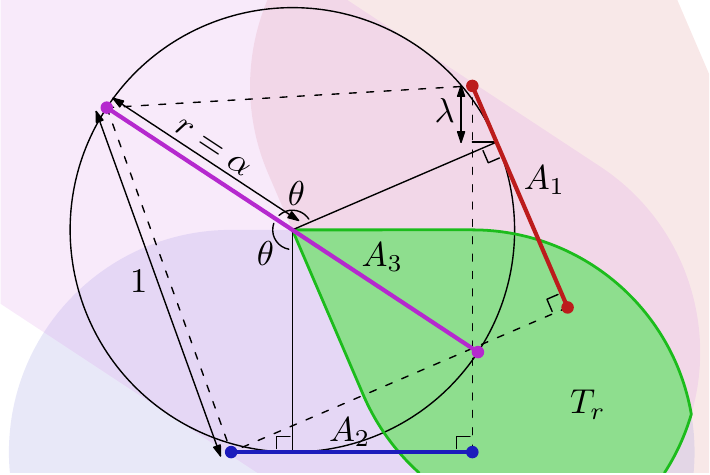}
    \caption{Three segments $A_1$, $A_2$, and $A_3$. Of these, $A_3$ is the diameter of a circle with radius $r$; the other two ($A_1$ and $A_2$) are tangent to the circle and are copies of one another reflected through $A_3$, such that all pairwise Hausdorff distances are at most $1$ (length of dashed segments).
    The top left vertex of $A_3$ is furthest (at distance $r$) from the middle set $T_r$ (green), so $\alpha(\{A_1, A_2, A_3\})$ is the radius $r$ of the circle.}
    \label{fig:three-convex-lower-bound}
\end{figure}

We define the \magicValue as
$\alpha^*=1/z\approx 0.6068$, where the value of $z$ is derived from Figure~\ref{fig:three-convex-trig}, and defined as $z:=\min\{\lambda+1-\cos(2\theta)\mid
\lambda\geq 0\text{, }
\theta\in(90^\circ,180^\circ)\text{, and }
\lambda+1-\cos(2\theta)=\|(-\lambda\cot(2\theta)-\sin(2\theta)+\sin(\theta),\lambda-\cos(2\theta)+\cos(\theta))\|\}\approx 1.647986325231$ (at $\lambda\approx0.253135$, $\theta\approx123.37^\circ$, verified using Wolfram Cloud).

\begin{lemma}\label{lem:convexProperty}
    Let \(\coll=\{A_1,\dots,A_m\}\) be a collection of convex regions in the plane, and $\alpha:=\alpha(\coll)$.
    There is some $A_i \in \coll$ with $\dhd(A_i,T_\alpha)=\alpha$.
\end{lemma}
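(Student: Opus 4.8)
The plan is to reduce the statement to a monotonicity‑plus‑continuity argument about a single real function. Write $G(\alpha):=\max_i \dhd(A_i,T_\alpha)$. Since $T_\alpha\subseteq A_i\oplus D_\alpha$ by construction, we always have $\dhd(T_\alpha,A_i)\le\alpha$, so $\dh(A_i,T_\alpha)\le\alpha$ is equivalent to $\dhd(A_i,T_\alpha)\le\alpha$. Hence $\alpha(\coll)=\min\{\alpha\mid G(\alpha)\le\alpha\}$, and the lemma is exactly the assertion that this minimum is attained with equality, i.e. $G(\alpha(\coll))=\alpha(\coll)$.

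First I would record the structural facts. The map $\alpha\mapsto T_\alpha$ is nested increasing ($T_\alpha\subseteq T_{\alpha'}$ for $\alpha\le\alpha'$, because $D_\alpha\subseteq D_{\alpha'}$), so each $\dhd(A_i,T_\alpha)=\sup_{a\in A_i} d(a,T_\alpha)$ is non‑increasing in $\alpha$, and therefore so is $G$. It is convenient to introduce $g(x):=\max_i d(x,A_i)$; because each $A_i$ is convex and compact, each $d(\cdot,A_i)$ is convex and tends to infinity at infinity, so $g$ is convex and continuous with compact sublevel sets $T_\alpha=\{x\mid g(x)\le\alpha\}$. Let $\alpha_0:=\min_x g(x)$; then $T_\alpha=\emptyset$ (and $G=\infty$) for $\alpha<\alpha_0$, while $T_\alpha\neq\emptyset$ for $\alpha\ge\alpha_0$.

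The heart of the proof — and the step I expect to be the main obstacle — is continuity of $G$ on $(\alpha_0,\infty)$. Since $G$ is a finite maximum of the $\dhd(A_i,T_\alpha)$, it suffices to prove that $\alpha\mapsto d(a,T_\alpha)$ is continuous, uniformly over $a$ in the compact set $\bigcup_i A_i$. Monotonicity already gives the easy inequalities, so what must be ruled out are jumps. For right‑continuity at $\alpha_1$ I would take $\alpha_n\downarrow\alpha_1$, pick near‑optimal $t_n\in T_{\alpha_n}$, extract a convergent subsequence $t_n\to t^*$, and use continuity of $g$ to get $g(t^*)\le\alpha_1$, i.e. $t^*\in T_{\alpha_1}$, contradicting any downward jump. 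Left‑continuity is the delicate part, and here convexity is essential: given the nearest point $x_1\in T_{\alpha_1}$ to $a$ and any point $x_0$ with $g(x_0)=\alpha_0<\alpha_1$, convexity of $g$ forces $g$ strictly below $\alpha_1$ along $\seg(x_0,x_1)$, so for $\alpha$ slightly below $\alpha_1$ a point at distance $O(\alpha_1-\alpha)$ from $x_1$ already lies in $T_\alpha$. This yields $d(a,T_\alpha)\le d(a,T_{\alpha_1})+C(\alpha_1-\alpha)$ with a constant $C$ uniform over $a\in\bigcup_i A_i$ (all relevant distances stay bounded), and hence left‑continuity of $G$.

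Finally I would assemble the pieces. The function $H(\alpha):=G(\alpha)-\alpha$ is continuous and strictly decreasing on $(\alpha_0,\infty)$, so $\{\,\alpha\mid H(\alpha)\le 0\,\}$ is a closed half‑line whose left endpoint $\alpha(\coll)$ satisfies $H(\alpha(\coll))=0$; unwinding the definition of $G$ gives some $A_i$ with $\dhd(A_i,T_{\alpha(\coll)})=\alpha(\coll)$, as claimed. Equivalently: if $G(\alpha(\coll))<\alpha(\coll)$, then left‑continuity lets me decrease $\alpha$ slightly while keeping $G(\alpha)<\alpha$, contradicting minimality. The only remaining situation is the degenerate boundary case $\alpha(\coll)=\alpha_0$ (for instance when all $A_i$ coincide), where $T_{\alpha_0}$ is the minimizer set of $g$; this is handled directly, since for the index attaining $g(x)=\alpha_0$ the farthest‑point distance is at least the nearest‑point distance $\alpha_0$, which already forces equality.
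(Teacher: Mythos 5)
Your overall strategy matches the paper's: monotonicity and continuity of $\alpha\mapsto\max_i\dhd(A_i,T_\alpha)$ force equality at the minimizing $\alpha$, except in the degenerate case where $\alpha(\coll)$ equals the smallest value $\alpha_0$ for which $T_{\alpha_0}\neq\emptyset$, which must be treated separately. Your continuity argument (in particular left-continuity via convexity of $g=\max_i d(\cdot,A_i)$) is sound, and is in fact more detailed than the paper's, which simply asserts that $T_\gamma$ varies continuously.

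The gap is in the degenerate case. There you claim that for a minimizer $x$ of $g$ and an index $i$ attaining $g(x)=d(x,A_i)=\alpha_0$, ``the farthest-point distance is at least the nearest-point distance $\alpha_0$.'' That inequality gives $\dhd(A_i,\{x\})\ge\alpha_0$, but what you need is $\dhd(A_i,T_{\alpha_0})\ge\alpha_0$, and since $\{x\}\subseteq T_{\alpha_0}$ the comparison goes the wrong way: $\dhd(A_i,T_{\alpha_0})\le\dhd(A_i,\{x\})$. When $T_{\alpha_0}$ is a nondegenerate segment this really fails for some choices of $x$ and $i$: take $A_1=[0,1]\times\{0\}$, $A_2=[0,1]\times\{2\}$, and $A_3$ a single point at distance exactly $1$ from $(0,1)$ and strictly less than $1$ from every other point of $[0,1]\times\{1\}$; then $\alpha_0=1$, $T_1=[0,1]\times\{1\}$, index $3$ is active at the minimizer $x=(0,1)$, yet $\dhd(A_3,T_1)<1$. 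The lemma still holds here (via $A_1$), but your argument does not identify a correct witness. The paper closes this case geometrically: for $\alpha_0>0$ the convex set $T_{\alpha_0}$ has empty interior, hence is a point or a segment, and all of $T_{\alpha_0}$ lies on $\partial\dilate{A_i}$ for a \emph{single} $i$ (a segment contained in the convex set $\dilate{A_i}$ whose relative interior meets $\partial\dilate{A_i}$ lies entirely in $\partial\dilate{A_i}$). Then every $t\in T_{\alpha_0}$ satisfies $d(t,A_i)=\alpha_0$, so the nearest point $a\in A_i$ to any such $t$ has $d(a,T_{\alpha_0})=\alpha_0$, giving $\dhd(A_i,T_{\alpha_0})\ge\alpha_0$. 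You need some version of this argument to complete your proof.
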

\begin{proof}
    By construction, we have $\dhd(T_{\beta},A_i)\leq\beta$ for all $i$ and all $\beta$.
    (Recall that this is equivalent to $T_{\beta} \subseteq A_i \oplus D_\beta$.)
    Moreover, if $T_{\beta}$ is nonempty, then for any $i$, the map $\gamma\mapsto\dhd(T_{\gamma},A_i)$ is continuous on the domain $[\beta,\infty)$, as $T_\gamma$ changes continuously.
    We show that for some $i$, we have $\dhd(A_i,T_\alpha)=\alpha$.
    If instead $\dhd(A_i,T_\alpha)<\alpha$ for all $i$, then unless $T_{\beta}$ is empty for all $\beta<\alpha$, we can decrease $\alpha$, contradicting minimality of $\alpha$.
    If instead $\alpha$ is the minimum value for which $T_\alpha$ is nonempty, then either $\alpha=0$ and we are done because $T_\alpha$ contains all $A_i$, or $\alpha>0$ and $T_\alpha$ has no interior (when viewed as a subset of the plane).
    Because $T_\alpha$ is the intersection of convex sets, it is convex.
    If it has no interior, it is either a segment or a point, and by convexity it must lie on the boundary of $\dilate{A_i}$ for some $i$, contradicting that $\dhd(A_i,T_\alpha)<\alpha$.
\end{proof}

\begin{figure}[tbp]
        \centering
        \includegraphics{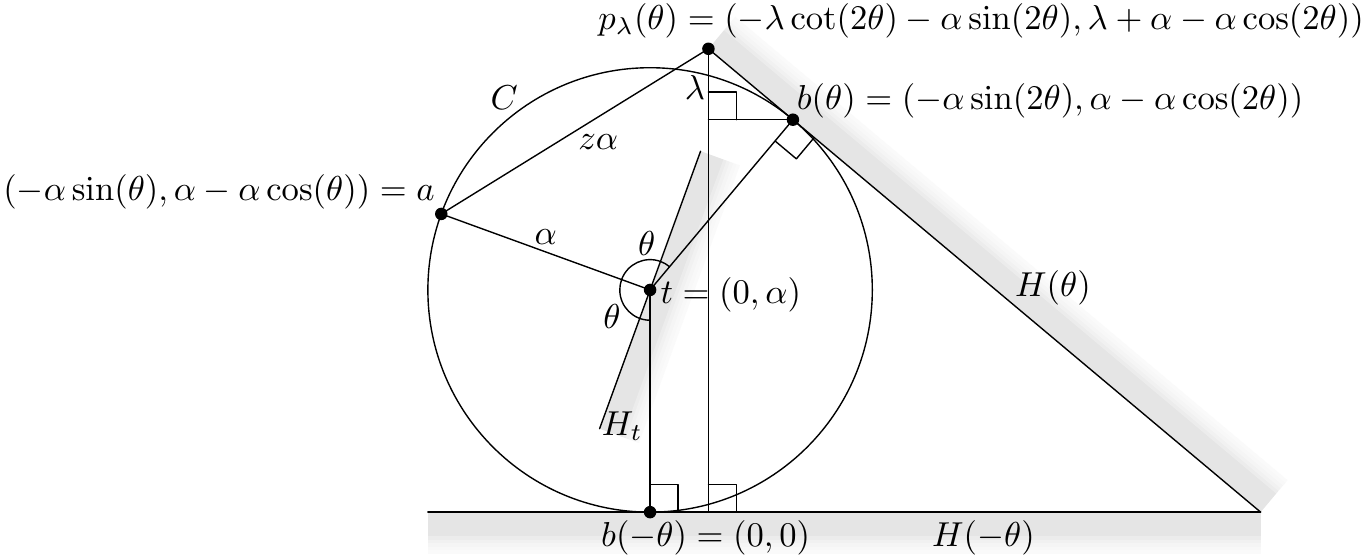}
        \caption{Derivation of the expression for $z$.}
        \label{fig:three-convex-trig}
\end{figure}
\begin{lemma}\label{lem:bound3convex}
    Let \(\coll=\{A_1,A_2,A_3\}\) be convex regions in the plane. Let \(\alpha:=\alpha(\coll)\) and \(d=\max_{i,j}\dh(A_i,A_j)\), then \(d\geq\alpha/\alpha^*\) (equivalently \(d\geq z\alpha\)).
\end{lemma}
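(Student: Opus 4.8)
The plan is to locate a single critical point that certifies the value $\alpha:=\alpha(\coll)$, read off the local geometry forced by convexity and by the minimality of $\alpha$, and turn this into a constrained optimization in two parameters whose optimum is exactly the quantity $z$ defining the \magicValue. First I would invoke \cref{lem:convexProperty} to obtain an index (after relabeling, say $i=3$) and a point $p\in A_3$ with $\dhd(A_3,T_\alpha)$ realized at $p$, i.e.\ $d(p,T_\alpha)=\alpha$. Let $q$ be the nearest point of $T_\alpha$ to $p$, which is unique since $T_\alpha$ is convex; then $|pq|=\alpha$ and $p-q$ is an outer normal of $T_\alpha$ at $q$. Because $d\ge\dhd(A_3,A_j)\ge d(p,A_j)$ for $j\in\{1,2\}$ and $d\ge\dh(A_1,A_2)$, it suffices to bound these point–set and set–set distances from below. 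The reduction step is to replace each $A_j$ by the data that actually constrains these distances: the nearest point $b_j\in A_j$ to $q$ and the supporting half-plane $H_j\supseteq A_j$ through $b_j$ orthogonal to $q-b_j$, both of which exist by convexity.

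The heart of the argument is to show that $q$ is pinned by $A_1$ and $A_2$ rather than by $A_3$. Since $p\in A_3$, the outer normal of $\dilate{A_3}$ at $q$ points away from $A_3$, hence it cannot be a nonnegative multiple of $p-q$, which points towards $A_3$. As $p-q$ lies in the normal cone of $T_\alpha=\bigcap_i\dilate{A_i}$ at $q$, and that cone is generated by the outer normals of the \emph{active} offsets, the direction $p-q$ must be generated by the normals of $\dilate{A_1}$ and $\dilate{A_2}$. Minimality of $\alpha$ then forces both to be active, i.e.\ $d(q,A_1)=d(q,A_2)=\alpha$ with nearest points $b_1,b_2$; if only one were active the normal cone would be one-dimensional, $p-q$ would be parallel to $(q-b_1)/\alpha$, and we would already obtain $d\ge d(p,A_1)\ge 2\alpha\ge z\alpha$. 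Writing the unit outer normals as $n_j=(q-b_j)/\alpha$ and $u=(p-q)/\alpha$, I record the relation $u\in\operatorname{cone}(n_1,n_2)$ with $\lVert n_1\rVert=\lVert n_2\rVert=1$.

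From $A_j\subseteq H_j$ I get $d(p,A_j)\ge n_j\cdot(p-b_j)=\alpha(1+n_j\cdot u)$, and from $\dh(A_1,A_2)\le d$ together with $A_j\subseteq H_j$ I get linear inequalities such as $n_2\cdot b_1\le d-\alpha$, i.e.\ bounds in terms of $n_1\cdot n_2$. Parametrizing the configuration by an angle ($\theta$ and its double $2\theta$) between the relevant normals and a length parameter $\lambda$ locating $A_3$—whose worst case, as in \cref{fig:three-convex-lower-bound}, degenerates to a segment with far endpoint $p$, so that arcs of $\dilate{A_3}$ enter and only two scalar degrees of freedom survive—all these inequalities become lower bounds on $d/\alpha$ that can be tightened simultaneously only when two of them coincide. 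Imposing that coincidence is exactly the equality constraint in the definition of $z$, and minimizing $d/\alpha$ over $(\lambda,\theta)$ yields $d/\alpha\ge z$, i.e.\ $d\ge z\alpha=\alpha/\alpha^*$.

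The main obstacle is the reduction and parametrization in the last step: making precise that the extremal configuration collapses to the two-parameter family of \cref{fig:three-convex-trig}, and that the binding constraints are exactly the pair whose equality defines $z$. In particular one must argue that replacing the $A_j$ by half-planes and a segment does not decrease $d$ relative to $\alpha$, and must track whether the boundary of $\dilate{A_j}$ near $q$ is a segment or a circular arc, since this determines which trigonometric terms (the $\cos\theta,\sin\theta$ versus $\cos2\theta,\sin2\theta$ contributions) appear. Once the two-parameter optimization is set up, locating the minimizer ($\lambda\approx0.253135$, $\theta\approx123.37^\circ$) and evaluating $z\approx1.647986$ is a routine—if tedious—numerical computation, which I would delegate to a solver exactly as the definition of $z$ already does.
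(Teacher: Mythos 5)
Your overall strategy coincides with the paper's: use \cref{lem:convexProperty} to obtain a witness $p\in A_3$ at distance exactly $\alpha$ from $T_\alpha$ with foot point $q$, replace $A_1,A_2$ by the nearest points $b_1,b_2$ to $q$ and their supporting half-planes, and reduce to a two-parameter optimization whose value is the $z$ in the definition of the \magicValue. However, the part you defer as ``the main obstacle'' is where essentially all of the work lies, and two of your shortcuts are also not sound as stated. First, what is actually needed is that \emph{no point of $\dilate{A_1}\cap\dilate{A_2}$} (not merely of $T_\alpha$) lies within distance $\alpha$ of $p$; this is the fact that licenses discarding $A_3$ and arguing only about $A_1,A_2$ afterwards. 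It follows immediately from the observation that any point of $\dilate{A_1}\cap\dilate{A_2}$ within distance $\alpha$ of $p$ lies in $\dilate{A_3}$ automatically (since $p\in A_3$) and hence in $T_\alpha$. Your normal-cone argument only speaks about $T_\alpha$, and the identity ``normal cone of an intersection equals the sum of the normal cones of the active sets'' needs a constraint qualification that fails precisely in the degenerate configurations relevant here (e.g.\ when $T_\alpha$ is a single point at which two dilations meet tangentially, its normal cone is the whole plane while the individual normal cones sum to a line).

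Second, recording $u\in\operatorname{cone}(n_1,n_2)$ is far from enough to set up the optimization. One must show that $b_1$ and $b_2$ lie \emph{exactly} on the circle of radius $\alpha$ about $q$, and, crucially, that the clockwise angles $\theta_1,\theta_2$ of $b_1,b_2$ at $q$ (measured from the direction of $p$) can be taken to satisfy $\theta_2\le 0\le\theta_1$ with $\theta_1-\theta_2>180^\circ$. Each excluded angular regime requires its own geometric contradiction: for $\theta_1,\theta_2$ of equal sign, an arc of radius $\alpha/2$ through the midpoints lies inside $\dilate{A_2}$ and produces a point of $T_\alpha$ too close to $p$; for $\theta_1-\theta_2<180^\circ$, the segment from $q$ to the midpoint of $b_1b_2$ escapes the supporting half-plane of $\dilate{A_1}\cap\dilate{A_2}$ at $q$; the case $\theta_1-\theta_2=180^\circ$ already forces $d\ge 2\alpha$. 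Finally, the passage to the symmetric family $\theta_2=-\theta_1$ and the claim that the minimum of $\max\{d(p,\cdot),\dd(\cdot,H(-\theta_1))\}$ is attained where the two terms are equal require the monotonicity of $\dd(\cdot,H(\theta))$ in $\theta$, convexity in $\lambda$, and an intermediate-value argument in $\theta_1$; asserting that the binding constraints ``can be tightened simultaneously only when two of them coincide'' does not establish this. As it stands the proposal is a correct outline of the paper's approach with the decisive middle section missing.
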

\begin{proof}
        By \cref{lem:convexProperty}, we have $\dhd(A_i,T_\alpha)=\alpha$ for some $i$.
        If $x$ is a point, we will write $\dd(x,\cdot)$ to denote $\dhd(\{x\},\cdot)$.
        Without loss of generality assume that $\dhd(A_3,T_\alpha)=\alpha$ and $\dd(a,T_\alpha)=d(a,t)=\alpha$ with $a\in A_3$ and $t\in T_\alpha$.
        Let $T=\dilate{A_1}\cap\dilate{A_2}\supseteq T_\alpha$.
        There is no point $t'\in T$ with $d(t',a)<\alpha$, since then $\dd(t',A_3)<\alpha$, in which case $t'\in \dilate{A_3}$ and therefore $t'\in T_\alpha$, contradicting that $\dhd(a,T_\alpha)=\alpha$.
        So $t$ is a point in $T$ closest to $a$ and hence $\dd(a,T)\geq\alpha$.

        Assume that $\alpha>0$ (otherwise we are done) and let $H_t$ be the half-plane (not containing~$a$) bounded by the line through $t$ that is perpendicular to $\seg(t,a)$, see also \cref{fig:three-convex-trig}.
        The set $T$ is convex, as it is the intersection of convex sets.
        Therefore, if $T$ contains a point $p$, then $T$ also contains $\seg(t,p)$.
        Since $t$ is a point of $T$ closest to $a$, no such segment intersects the open disk of radius $\alpha$ centered at $a$, and therefore $T\subseteq H_t$.

        Let $C$ be the circle of radius $\alpha$ centered at $t$.
        For the remainder of the proof, let $i\in\{1,2\}$.
        Let $b_i$ be a point of $A_i$ closest to $t$.
        Then $b_i$ lies on or inside $C$.
        If $b_i\neq t$, we can define the half-plane $H_i$ (not containing $t$) bounded by the line through $b_i$ that is perpendicular to $\seg(t,b_i)$.
        For $b_i\neq t$, we have by convexity of $A_i$ and $b_i$ being closest to $t$ that $A_i\subseteq H_i$, so $\dd(a,A_i)\geq\dd(a,H_i)$.
        Without loss of generality, assume that $\dd(a,H_i)<\alpha/\alpha^*$ (otherwise $d\geq\dhd(A_3,A_i)\geq\dd(a,A_i)\geq\alpha/\alpha^*$).

        If $d(a,b_i)\geq 2\alpha$, then $b_i$ lies diametrically opposite to $a$ on $C$, but then $\dd(a,H_i)\geq 2\alpha>\alpha/\alpha^*$, which is a contradiction, so $d(a,b_i)<2\alpha$.
        Let $t_i\in\dilate{A_i}$ be the midpoint of $b_i$ and $a$, then $d(a,t_i)<\alpha=d(a,t)$.
        If $d(b_1,t)<\alpha$, then $T$ contains a point interior to $\seg(t,t_2)$, contradicting that $\dd(a,T)\geq\alpha$.
        So $b_1$ and (analogously) $b_2$ lie on $C$.

        Let~$\theta_i$ be the clockwise angle $\measuredangle atb_i\in(-180^\circ,180^\circ)$.
        Define $b(\theta)$ to be the point on $C$ for which $\theta$ is the clockwise angle $\measuredangle atb(\theta)$, so that $b_i=b(\theta_i)$.
        Similarly, let $H(\theta)$ be the half-plane (not containing $C$) bounded by the line tangent to $C$ at $b(\theta)$, so that $H_i=H(\theta_i)$.
        Assume without loss of generality that $|\theta_1|\geq|\theta_2|$ (otherwise relabel $A_1$ and $A_2$).
        If $\theta_1$ and $\theta_2$ are both positive or both negative, consider the circle of radius $\alpha/2$ centered at the midpoint of $a$ and $t$. Then $t_1$ lies on the (shorter) arc of this circle connecting $t_2$ and $t$.
        This arc lies entirely in $\dilate{A_2}$, so $t_1$ lies in $T_\alpha$, which contradicts that there is no point $t'\in T$ with $d(t',a)<\alpha$.
        So assume without loss of generality that $\theta_2\leq 0\leq\theta_1$ (otherwise mirror all points).
        If $\theta_1-\theta_2<180^\circ$, then $T$ contains the segment between $t$ and the midpoint of $b_1$ and $b_2$.
        This segment does not lie in $H_t$, which contradicts that $T\subseteq H_t$.
        Moreover, if $\theta_1-\theta_2=180^\circ$, then $b_1$ and $b_2$ are antipodal on $C$, so $\dh(A_1,A_2)\geq\dh(H(\theta_1),H(\theta_2))=2\alpha>\alpha/\alpha^*$.
        So consider the remaining case where $\theta_1-\theta_2>180^\circ$.

        In fact, it will turn out that in the worst case, $\theta_2=-\theta_1$.
        Suppose that $p\in A_1\subseteq H(\theta_1)$ is the point of $A_1$ closest to $a$.
        We have $d\geq d(a,p)$ and $d\geq\dd(p,A_2)\geq\dd(p,H(\theta_2))$.
        Moreover, since $-\theta_1\leq\theta_2<\theta_1-180^\circ$, the value of $\dd(p,H(\theta))$ decreases as $\theta\in[-\theta_1,\theta_2]$ decreases.
        In particular, we have $\dd(p,H(\theta_2))\geq\dd(p,H(-\theta_1))$.
        Since $|\theta_1|\geq|\theta_2|$, we have $\theta_1\in(90^\circ,180^\circ)$.
        Let $\lambda_p=\dd(p,H(-\theta_1))-\dd(b(\theta_1),H(-\theta_1))$.
        If $\lambda_p<0$, then $d(a,b(\theta_1))<d(a,p)$, and $p$ would not be a point of $A_1$ closest to $a$ because the angle $\measuredangle ab(\theta_1)p$ would be at least 90 degrees.
        Combining the above lower bounds, we obtain $d\geq\min\{\max\{d(a,p),\dd(p,H(-\theta_1))\}\mid p\in H(\theta_1),\lambda_p\geq 0\}$.
        The right hand side of the above inequality is attained for some $p$ on the boundary of $H(\theta_1)$.
        We parameterize such points $p$ with parameters $\lambda$ and $\theta_1$: let $p_\lambda(\theta_1)$ be the unique point on the boundary of $H(\theta_1)$ with $\dd(p_\lambda(\theta_1),H(-\theta_1))=\dd(b(\theta_1),H(-\theta_1))+\lambda$.
        The above inequality becomes $d\geq\min_{\lambda\geq 0}\max\{d(a,p_\lambda(\theta_1)),\dd(b(\theta_1),H(-\theta_1))+\lambda\}$.

        We need to minimize this quantity over all values of $\lambda\geq 0$ and $\theta_1\in(90^\circ,180^\circ)$.
        We will show that it is minimized when its terms $d(a,p_\lambda(\theta_1))$ and $\dd(b(\theta_1),H(-\theta_1))+\lambda$ are equal.
        The point $p_\lambda(\theta_1)$, and hence the two terms, vary continuously in $\lambda$ and $\theta_1$.
        For fixed $\theta_1$, both terms are convex as a function of $\lambda$.
        Therefore, for any fixed $\theta_1$, the function is minimized either when $\lambda=0$, or the two terms are equal.
        As $\theta_1$ approaches $180^\circ$, the first term approaches at least $2\alpha$ (for any $\lambda$), and as $\theta_1$ approaches $90^\circ$, the second term approaches at least $2\alpha$.
        Since the optimal value is less than $2\alpha$, there exists an optimal value of $\theta_1$.
        Assume for a contradiction that the terms are not equal in an optimal solution.
        Fix $\lambda=0$, and consider the two terms as a function of $\theta_1$.
        For $\theta_1\approx 90^\circ$ and $\lambda=0$, we have $d(a,p_\lambda(\theta_1))\approx\alpha<2\alpha\approx \dd(b(\theta_1),H(-\theta_1))+\lambda$.
        Conversely for $\theta_1\approx 180^\circ$ and $\lambda=0$, we have $d(a,p_\lambda(\theta_1))\approx 2\alpha>0\approx \dd(b(\theta_1),H(-\theta_1))+\lambda$.
        Hence, by the intermediate value theorem, the inequality as a function of $\theta_1$ (with fixed $\lambda=0$) is minimized when the terms are equal.
        We handled the case with $\lambda>0$ above, so our inequality becomes $d\geq\min\{\dd(b(\theta_1),H(-\theta_1))+\lambda\mid \lambda\geq 0, \theta_1\in(90^\circ,180^\circ), \text{ and } d(a,p_\lambda(\theta_1))=\dd(b(\theta_1),H(-\theta_1))+\lambda \}$.
        Following the derivation in \cref{fig:three-convex-trig}, this corresponds to~$d\geq z\alpha=\alpha/\alpha^*$.
    \end{proof}

\subsection{Convexity and connectedness of \texorpdfstring{\(\boldsymbol{T_\alpha}\)}{S-alpha}}
In this subsection, we use \(\alpha := \alpha(\coll)\) for simplicity. Similar to \cref{sec:two-sets-properties}, we examine the properties of \(T_\alpha\) for different types of input. We arrive at straightforward generalizations of the results obtained for two sets.

\begin{enumerate}
    \item If all \(A_i\) are convex, then \(T_\alpha\) is convex.
    \item If one of the \(A_i\) is connected and the rest are convex, then \(T_\alpha\) is connected.
    \item For some input where each \(A_i\) is connected, and at least two are not convex, \(T_\alpha\) is disconnected.
\end{enumerate}

Property 1 follows from the same argument as before: \(T_\alpha\) is the intersection of convex sets, and therefore itself convex. Property 3 can be shown by extending the construction from \cref{fig:two-sets-quadratic-lower-bound} with some other sets: if the intersection of two of the sets is not connected, adding more sets will not make \(T_\alpha\) connected as long as the pairwise Hausdorff distance does not increase. We establish Property 2 with the following lemma.

\begin{lemma}
    Let \(\coll = \{A_1, \ldots, A_m\}\) be a set of connected regions of the plane, with \(A_i\) convex for \(i < m\). Then \(T_\alpha\) is connected.
\end{lemma}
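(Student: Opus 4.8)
The plan is to generalize the argument of \cref{lem:one-non-convex-connected} (Property 2 for two sets), where convexity of one set gave a continuous closest-point map. Here we have $m-1$ convex sets $A_1,\dots,A_{m-1}$ and one connected set $A_m$, and the middle is $T_\alpha=\bigcap_i \dilate{A_i}=\bigcap_i (A_i\oplus D_\alpha)$. The idea is to connect any two points of $T_\alpha$ by first retracting each to a canonical path governed by $A_m$, then connecting along a curve inside $A_m$.

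First I would set up, for each convex $A_i$ with $i<m$, the continuous nearest-point projection $\rho_i\colon \R^2\to A_i$ (well-defined and continuous because $A_i$ is convex and closed). For a point $x\in T_\alpha$ we have $d(x,\rho_i(x))\le\alpha$ for every $i<m$, and also $x$ lies within distance $\alpha$ of some point of $A_m$. The key construction is a deformation retraction of $T_\alpha$: I want to define, for $x\in T_\alpha$, a point $m(x)\in A_m$ within distance $\alpha$, and then show that the straight segment from $x$ to an appropriately interpolated target stays inside $T_\alpha$. The cleanest route is to pick for each $x$ a witness $y(x)\in A_m$ with $d(x,y(x))\le\alpha$, and move $x$ toward $y(x)$. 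The crucial point is that sliding $x$ along $\seg(x,y(x))$ keeps it in every $\dilate{A_i}$: for the convex sets this holds because $\dilate{A_i}$ is convex and contains both $x$ and $\rho_i(x)$—but we need the whole segment toward $y(x)$, not toward $\rho_i(x)$, to remain within distance $\alpha$ of $A_i$. So I would instead argue via convexity of the sets $\dilate{A_i}$ directly: a point moving monotonically toward $A_m$ need not stay in $\dilate{A_i}$, so the retraction must be chosen to respect all constraints simultaneously.

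The main obstacle, therefore, is coordinating the constraints. I expect the right formulation is this: define $\phi\colon T_\alpha\to A_m$ by taking a continuous choice of nearest point in $A_m$ (this requires care since $A_m$ is merely connected, not convex, so nearest points need not be unique or continuous). To avoid that difficulty I would instead mimic the two-set proof more faithfully by treating $T_\alpha' := \bigcap_{i<m}\dilate{A_i}$, which is a convex set (intersection of convex sets), and noting $T_\alpha=T_\alpha'\cap \dilate{A_m}$. Since $T_\alpha'$ is convex, there is a continuous nearest-point map $\sigma\colon A_m\to T_\alpha'$ onto the convex set $T_\alpha'$; composing with the natural inclusion, each point of $A_m$ within distance $\alpha$ of $T_\alpha'$ gives a point of $T_\alpha$. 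Then, exactly as in \cref{lem:one-non-convex-connected}, for $s,s'\in T_\alpha$ with witnesses $b,b'\in A_m$ at distance $\le\alpha$, the segments $\seg(s,\sigma(b))$ and $\seg(s',\sigma(b'))$ lie in $T_\alpha$ (by convexity of $T_\alpha'$ and the distance bound to $A_m$), and a path $\pi$ from $b$ to $b'$ inside the connected set $A_m$ maps under $\sigma$ to a path inside $T_\alpha$ joining $\sigma(b)$ to $\sigma(b')$.

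What I must verify carefully is that the image $\sigma(\pi)$ actually lands in $\dilate{A_m}$ (not just in the convex set $T_\alpha'$): this is where the distance bound $d(b,\sigma(b))\le\alpha$ for $b\in A_m$ is used, ensuring $\sigma(b)\in \dilate{A_m}\cap T_\alpha'=T_\alpha$. I would confirm that $\sigma$ can be taken to move each point of $A_m$ by at most $\alpha$—which follows because every point of $A_m$ lies within distance $\alpha$ of $T_\alpha$ (a consequence of $\dhd(A_m,T_\alpha)\le\alpha$, guaranteed by the definition of $\alpha=\alpha(\coll)$) and the nearest-point projection onto the convex superset $T_\alpha'$ can only move a point less. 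Thus the heart of the proof is the reduction to a single convex set $T_\alpha'$ plus the distance bookkeeping that keeps the retracted path inside $\dilate{A_m}$; the topological connectivity of $A_m$ then finishes the argument in the same way as the two-set case.
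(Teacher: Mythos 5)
Your proposal is correct and matches the paper's proof in its essential approach: both reduce to the two-set case by forming the convex set $T_\alpha'=\bigcap_{i<m}\dilate{A_i}$, writing $T_\alpha=T_\alpha'\cap\dilate{A_m}$, and then running the argument of \cref{lem:one-non-convex-connected} with $T_\alpha'$ as the convex set and $A_m$ as the connected one (the paper invokes that lemma directly after a normalization, whereas you inline its nearest-point-projection argument). The distance bookkeeping you flag --- that $\sigma(b)$ moves each $b\in A_m$ by at most $\alpha$ because $\dhd(A_m,T_\alpha)\leq\alpha$ --- is exactly the fact the paper also relies on, so no gap remains.
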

\begin{proof}
    Consider the set \(T_\alpha' = \bigcap_{i = 1}^{m - 1} \dilate{A_i}\). This set is convex, as it is the intersection of convex sets. Also note that by definition of \(T_\alpha\), \(A_m\) has directed Hausdorff distance at most \(\alpha\) to \(T_\alpha'\). Let \(A = T_\alpha'\) and \(B = A_m\), normalised such that \(\dhd(B, A) = 1\). We now apply \cref{lem:one-non-convex-connected} to \(A\) and \(B\), using zero as the value for \(\alpha\). We obtain the result that \(T_\alpha = T_\alpha' \oplus D_0 \cap A_m \oplus D_\alpha\) is connected. Note that the Hausdorff distance from \(A\) to \(B\) may be bigger than one, but this does not matter for the proof of \cref{lem:one-non-convex-connected}.
\end{proof}

\begin{figure}
    \begin{subfigure}{0.3\textwidth}
        \includegraphics[width=\textwidth,page=1]{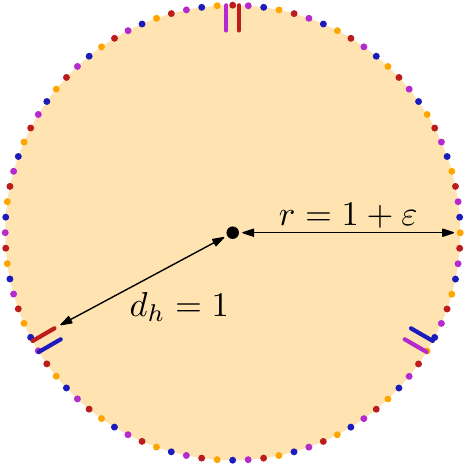}
        \caption{}
        \label{fig:non-convex-no-helly-a}
    \end{subfigure}
    \hfill
    \begin{subfigure}{0.3\textwidth}
        \includegraphics[width=\textwidth,page=2]{figures/non-convex-no-helly.pdf}
        \caption{}
        \label{fig:non-convex-no-helly-b}
    \end{subfigure}
    \hfill
    \begin{subfigure}{0.3\textwidth}
        \includegraphics[width=\textwidth,page=4]{figures/non-convex-no-helly.pdf}
        \caption{}
        \label{fig:non-convex-no-helly-c}
    \end{subfigure}
    \caption{When the input sets are not convex, all sets may be necessary to realise the value of \(\alpha\). Figure \subref{fig:non-convex-no-helly-a} shows our input construction, along with the radius of the circle and the Hausdorff distance. Figure \subref{fig:non-convex-no-helly-b} shows that when all sets are present, the required value of \(\alpha\) is \((1 + \eps) / 2\). Figure \subref{fig:non-convex-no-helly-c} shows that with the red set removed, the required value of \(\alpha\) is reduced to \((1 + \eps / 2) / 2\).}
    \label{fig:non-convex-no-helly}
\end{figure}

\subsection{Helly-type properties}

An interesting question is whether there are any sets in the input
that could be removed while maintaining the same optimal value of \(\alpha\).
To make this precise, we need some definitions.
We say  a collection $\coll$ of $m$ sets  is \emph{$d$-sufficient},
if there is a collection $\coll_d \subset \coll$ of $d$ sets
such that $\alpha(\coll) = \alpha(\coll_d)$.

\begin{lemma}
    For every $m$, there is a collection $\coll$ of $m$ connected sets in the plane that
    is not $(m-1)$-sufficient.
\end{lemma}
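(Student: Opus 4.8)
The plan is to exhibit an explicit symmetric family and reduce the statement to a single strict inequality, using two structural facts. First, $\alpha(\cdot)$ is monotone under inclusion: if $\coll' \subseteq \coll$ then $\alpha(\coll') \le \alpha(\coll)$, because removing a set only enlarges $T_\alpha = \bigcap_i \dilate{A_i}$ and drops one of the covering constraints $\dhd(A_i, T_\alpha) \le \alpha$, so any $\alpha$ feasible for $\coll$ is feasible for $\coll'$. Consequently $\coll$ fails to be $(m-1)$-sufficient as soon as $\alpha(\coll \setminus \{A_j\}) < \alpha(\coll)$ for every $j$, since every $(m-1)$-element subcollection is exactly some $\coll \setminus \{A_j\}$; monotonicity moreover rules out equality for all smaller subcollections as well. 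Second, I would build the family with $m$-fold rotational symmetry about a center $O$, i.e.\ $A_i$ is $A_1$ rotated by $2\pi i/m$, so that the cyclic group acts transitively on the sets. Since $\alpha(\cdot)$ is invariant under a global isometry, $\alpha(\coll \setminus \{A_j\})$ is then independent of $j$, and it suffices to prove the strict drop for one removal, say $\alpha(\coll \setminus \{A_1\}) < \alpha(\coll)$.

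For the construction I would generalize \cref{fig:non-convex-no-helly}. Fix a small $\eps > 0$ and a circle of radius $\rho$ centered at $O$, and place $m$ connected, non-convex ``hook'' sets $A_1, \dots, A_m$ with the symmetry above, scaled so that the largest pairwise Hausdorff distance is exactly $1$ and each hook cuts the candidate middle from its own angular sector. The sets must be non-convex: for convex input the paper already proves a Helly-type bound, so no convex counterexample exists. As in \cref{fig:three-sets-connected-lower-bound}, where a point $p$ lies outside $\dilate{T_\alpha}$ for every $\alpha < 1$, the design goal is that the farthest input point from the middle $T_\alpha$ is realized \emph{simultaneously} by all $m$ sets, so that all $m$ covering constraints become tight precisely at $\alpha = \alpha(\coll)$; this is what makes every set indispensable.

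The core computation, and the main obstacle, is to verify two distance claims. (i) With all $m$ sets present, $T_\alpha$ is a small region around $O$ whose farthest input point (a tip of some $A_i$) sits at distance exactly $v_m := \alpha(\coll)$, because each of the $m$ hooks pushes $T_\alpha$ inward in its sector. (ii) Removing $A_1$ frees its sector and enlarges $T_\alpha$ there, so the previously critical point now lies within a strictly smaller radius $v_{m-1} < v_m$, giving $\alpha(\coll \setminus \{A_1\}) = v_{m-1}$. The delicate part is the exact trigonometry pinning $v_m$ and $v_{m-1}$ and proving $v_{m-1} < v_m$ for every $m$; I expect it to follow the template of \cref{fig:non-convex-no-helly}, where the three-set value $(1+\eps)/2$ drops to $(1+\eps/2)/2$ upon removal, the $\eps$-bonus being a sum of per-hook contributions that shrinks whenever a set is deleted. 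Combining (i), (ii), the symmetry reduction, and monotonicity yields $\alpha(\coll \setminus \{A_j\}) < \alpha(\coll)$ for all $j$, so $\coll$ is not $(m-1)$-sufficient.
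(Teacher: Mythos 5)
Your reductions are fine: monotonicity of $\alpha(\cdot)$ under taking subcollections is correct (removing a set enlarges $T_\alpha$ and deletes a covering constraint), so it indeed suffices to exhibit a family with $\alpha(\coll\setminus\{A_j\})<\alpha(\coll)$ for every $j$, and a transitive symmetry would reduce this to one value of $j$. But the entire content of the lemma is the construction, and that is exactly the part you leave open: your steps (i) and (ii) are stated as ``design goals'' whose verification you defer (``I expect it to follow the template''). Worse, the design principle you propose does not deliver what is needed. Having all $m$ covering constraints simultaneously \emph{tight} at $\alpha(\coll)$ is not the same as every set being \emph{pivotal}: four points at the corners of a square have all four constraints tight at the circumcenter, yet deleting one corner leaves an antipodal pair that forces the same value. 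More specifically to your layout, if each hook blocks $T_\alpha$ only ``in its own angular sector,'' then deleting $A_1$ enlarges $T_\alpha$ only in $A_1$'s sector; the critical tip of $A_j$ for $j\neq 1$ sits in $A_j$'s sector and is still obstructed there by the other hooks, so there is no reason the required $\alpha$ for covering $A_j$ decreases at all. You would need every remaining set's critical point to be obstructed \emph{specifically and simultaneously} by each of the other $m-1$ sets, and a rotationally symmetric family of sector-local hooks does not provide that.

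The paper resolves this with a deliberately \emph{asymmetric} and combinatorial construction: one disk of radius $1+\eps$ that is hard to cover (forcing $\alpha\geq(1+\eps)/2$), plus $m-1$ coincident boundary circles carrying inward protrusions, where at each protrusion location exactly one of the $m-1$ circle-sets is \emph{missing} the protrusion (each co-singleton subset is realized somewhere). Since $T_\alpha$ is an intersection, a bump appears in $T_\alpha$ at a location only if \emph{all} sets protrude there; with all sets present no bump ever appears, but deleting any one circle-set activates precisely the protrusion it was absent from, pulling $T_\alpha$ toward the disk's center and dropping the optimum to $(1+\eps/2)/2$ (deleting the disk drops it to $\eps/2$). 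Note this is the opposite of your reading of \cref{fig:non-convex-no-helly}, where you describe the $\eps$-bonus as a sum of per-hook contributions that ``shrinks'' upon deletion; in fact deletion \emph{creates} the helpful feature. Your proposal is missing this intersection-based pivotality mechanism, and without it (or a concrete substitute with verified distances) the proof does not go through.
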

\begin{proof}
\cref{fig:non-convex-no-helly} depicts a construction of four sets which are not $3$-sufficient, which generalises to more sets.
The example has one set that is a disk of radius \(1 + \eps\) (shown in orange in \cref{fig:non-convex-no-helly-a}), and \(m - 1\) sets that are circles on the boundary of this disk with $m-1$ protrusions
of some small length \(\eps\).
These protrusions are evenly spaced along the boundary of the disk, and in each location there is a distinct set out of the \(m - 1\) sets missing (each subset of size \(m - 2\) is represented by some protrusion). In the example, we have protrusions containing the red and the blue set, the blue and the purple set, and the red and the purple set.
This way, for the case where all sets are present (\cref{fig:non-convex-no-helly-b}), the protrusions don't have any influence on \(T_\alpha\), meaning that \(\alpha \geq (1 + \eps) / 2\)
is required to let \(\dilate{T_\alpha}\) contain the entire disk.
However, if we remove one set (other than the orange disk), there will be one protrusion where all sets are now present, meaning it will change the shape of \(T_\alpha\). Because of this, the center of the disk will already be covered with a smaller value of \(\alpha\), namely \((1 + \eps / 2) / 2\). This is shown in \cref{fig:non-convex-no-helly-c}: the dotted arc shows the dilation of the bump caused by the protrusion, which covers a part of the disk that would otherwise not be covered (shown as a dashed circle).
Note that if we remove the orange disk, it is sufficient to use a value of \(\alpha = \eps / 2\).
Further note that with a minor adaptation, all sets become polygonal and simply connected.
\end{proof}

We have shown that in general, we cannot remove any sets from the input while maintaining the same value of \(\alpha\). However, when all input sets are convex, we can show that there is always a subset of size at most three that has the same optimal value of \(\alpha\).

\begin{lemma}\label{lem:helly}
    Let \(\coll = \{A_1, \ldots, A_m\}\) be a collection of convex sets.
    Then there exists a subcollection \(\coll' \subseteq \coll\) of size at most three such that \(\alpha(\coll) = \alpha(\coll')\).
\end{lemma}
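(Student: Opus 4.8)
The plan is to prove this as a planar Helly-type statement: I will use \cref{lem:convexProperty} to produce a single-radius ``certificate'' that $\alpha(\coll)$ cannot be lowered, and then apply Helly's theorem in the plane to extract from this certificate a subcollection of at most three sets that already forces the same value. Two preliminaries set this up. First, a monotonicity observation: for any $\coll'\subseteq\coll$ one has $\alpha(\coll')\le\alpha(\coll)$, because removing sets only enlarges the intersection $T_\beta$ and drops coverage constraints; concretely, at $\beta=\alpha(\coll)$ every $A_i\in\coll'$ satisfies $A_i\subseteq T_\beta\oplus D_\beta\subseteq T^{\coll'}_\beta\oplus D_\beta$, where $T^{\coll'}_\beta$ denotes the intersection $\bigcap_{A\in\coll'}(A\oplus D_\beta)$. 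Hence it suffices to exhibit a triple $\coll'$ with $\alpha(\coll')\ge\alpha(\coll)$. Second, the case $\alpha(\coll)=0$ is trivial (all $A_i$ coincide, so a single set works), so I will assume $\alpha:=\alpha(\coll)>0$.

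Next I would build the certificate. By \cref{lem:convexProperty} there is a set $A_k\in\coll$ with $\dhd(A_k,T_\alpha)=\alpha$; since $A_k$ is compact and $T_\alpha$ is closed and nonempty, this supremum is attained at a point $a\in A_k$ whose nearest point $t\in T_\alpha$ has $d(a,t)=\alpha$. Consequently the \emph{open} disk $U$ of radius $\alpha$ about $a$ is disjoint from $T_\alpha=\bigcap_j\dilate{A_j}$. I now apply Helly's theorem in the plane to the finite convex family $\{U\}\cup\{\dilate{A_j}\}_j$: its total intersection is $U\cap T_\alpha=\emptyset$, so some three members already intersect in the empty set. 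No three of the $\dilate{A_j}$ alone can do this, since any such triple contains $T_\alpha\ni t$; therefore the empty triple must include $U$, giving (at most) two indices $i_1,i_2$ with $U\cap\dilate{A_{i_1}}\cap\dilate{A_{i_2}}=\emptyset$, i.e. $\dd\bigl(a,\dilate{A_{i_1}}\cap\dilate{A_{i_2}}\bigr)\ge\alpha$.

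I then take $\coll'=\{A_{i_1},A_{i_2},A_k\}$, of size at most three, and show $\alpha(\coll')\ge\alpha$. The crucial point is the monotonicity of $g(\beta):=\dd\bigl(a,(A_{i_1}\oplus D_\beta)\cap(A_{i_2}\oplus D_\beta)\bigr)$ in $\beta$: decreasing $\beta$ shrinks each dilated set, hence shrinks their intersection, hence moves $a$ no closer, so $g$ is non-increasing. Combined with $g(\alpha)\ge\alpha$ from the previous step, this yields $g(\beta)\ge\alpha>\beta$ for every $\beta<\alpha$. Since $T^{\coll'}_\beta\subseteq(A_{i_1}\oplus D_\beta)\cap(A_{i_2}\oplus D_\beta)$, the point $a\in A_k$ satisfies $\dd(a,T^{\coll'}_\beta)\ge g(\beta)>\beta$, so $A_k\not\subseteq T^{\coll'}_\beta\oplus D_\beta$ and $\coll'$ is infeasible at every $\beta<\alpha$. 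Therefore $\alpha(\coll')\ge\alpha$, and with the monotonicity observation $\alpha(\coll')=\alpha(\coll)$, as required.

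I expect the main obstacle to be packaging the lower-bound certificate as an emptiness statement in exactly the form that planar Helly consumes, and, above all, propagating it from the single radius $\beta=\alpha$ down to all $\beta<\alpha$. Helly by itself only certifies the configuration at $\beta=\alpha$; it is the non-increasing behaviour of $g$ under shrinking dilations that upgrades this into genuine infeasibility of the triple for every smaller radius, which is what actually forces $\alpha(\coll')\ge\alpha$ rather than mere equality at one radius. The one place demanding care is keeping the roles of the three sets straight: the two sets $A_{i_1},A_{i_2}$ returned by Helly pin the far point $a$ away from the pairwise intersection, while the third set $A_k$ must be retained because it supplies the point $a$ that the middle fails to cover.
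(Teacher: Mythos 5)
Your proof is correct, and it takes a genuinely different route from the paper's. The paper proves \cref{lem:helly} by a kinetic argument: it grows $\beta$ and does a case analysis on the critical event --- either $T_\beta$ becomes nonempty for the first time (a point or segment, handled via normal vectors that positively span the plane, from which three sets are extracted), or $\dilate{T_\beta}$ covers its last input point, in which case the responsible boundary feature of $\dilate{T_\beta}$ is traced back to at most two generating sets. That argument is constructive and feeds directly into the paper's exact algorithm (it identifies \emph{which} features determine the critical value), but it leans on boundary structure (``edges'', ``vertices'', ``parallel edges'') that is really only clean for polygonal input. Your argument instead packages the optimality certificate from \cref{lem:convexProperty} --- the open disk $U$ of radius $\alpha$ around the farthest point $a\in A_k$ is disjoint from $T_\alpha$ --- and feeds the finite convex family $\{U\}\cup\{\dilate{A_j}\}_j$ to planar Helly, correctly observing that the empty triple must contain $U$ because any three dilations contain $t$. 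The one step that genuinely needs the care you give it is the downward propagation: Helly only certifies emptiness at radius $\alpha$, and your monotonicity of $g(\beta)=\dd\bigl(a,(A_{i_1}\oplus D_\beta)\cap(A_{i_2}\oplus D_\beta)\bigr)$ under shrinking dilations is exactly what converts this into infeasibility of $\{A_{i_1},A_{i_2},A_k\}$ for every $\beta<\alpha$; together with the easy direction $\alpha(\coll')\leq\alpha(\coll)$ this closes the argument. The trade-off: your proof is shorter, avoids case analysis, and applies verbatim to arbitrary compact convex sets; the paper's proof is messier but yields the feature-level information used later for the $O(n^6)$ exact algorithm.
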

\begin{proof}
    Consider growing some value \(\beta\) from \(1/2\) to $1$.
    At some point, \(\dilate{T_\beta}\)
    contains all sets in~\(\coll\).
    There are two ways in which this can happen: (1) \(T_\beta\) is non-empty for the first time, and immediately the condition holds, or (2) \(T_\beta\) grows, and its dilation now covers the last point of all sets in \(\coll\).
    As $T_\beta$ is convex no new components can appear except for the first, and thus we have only those two cases.

    In Case~1, \(T_\beta\) is either a segment or a point; otherwise, \(T_{\beta'}\) would have been non-empty for some \(\beta' < \beta\). If it is a segment, it is generated by two parallel edges of some \(A_i,A_j \in \coll\)
    such that we have
    $\alpha(\{A_i,A_j\}) = \alpha(\coll)$.
    If it is a point, it is the common intersection of the dilation of some number of sets from \(\coll\); we argue that you can always pick three sets for which \(\beta\) is optimal.
    Let \(a\) be the single point
    in \(T_\beta\); consider the vectors $\mathcal{V}$ perpendicular to the boundaries of the dilated input sets intersecting in this point. The vectors $\mathcal{V}$ must
    positively span the plane:\footnote{We say $v_i \in \R^2$ \emph{span the plane positively}, if for every point $p \in \R^2$ there are some numbers $a_i \in \R^+$ such that $\sum a_i v_i  = p$.} otherwise, all vectors would lie in a common half-plane, and \(a\) would not be the first point to appear in \(T_\beta\).
    As we are in the plane, there must be subset $\mathcal{U} \subset \mathcal{V} $ of three vectors that positively span the plane by themselves. The three corresponding sets $A_i,A_j,A_k \in \coll$
    satisfy
    $\alpha(\{A_i,A_j,A_k\}) = \alpha(\coll)$.

    In Case 2, as our input sets are convex, \(T_\beta\) itself is also convex. Let \(a \in A_i\) be one of the last points of \(\coll\) to be covered by \(T_\beta^\oplus\). As \(T_\beta^\oplus\) is convex, \(a\) must be on its boundary; let \(c\) be the piece of boundary curve \(a\) lies on. This piece of curve is either generated by the dilation of some boundary curve in \(T_\beta\), or by the dilation of one of its vertices. If it's the dilation of a boundary curve, it can be traced back directly to boundary curve of some \(A_j\), in which case \(A_i\) and \(A_j\) have Hausdorff distance \(2\beta\), and \(\alpha(\{A_i,A_j\}) = \alpha(\coll)\) for any choice of \(k\). If it can be traced back to a vertex of \(T_\beta\), this vertex is generated by the intersection of the boundaries of some \(A_j^\oplus\) and \(A_k^\oplus\), in which case we also have that \(\alpha(\{A_i,A_j,A_k\}) = \alpha(\coll)\).
\end{proof}

Combining the previous lemma with \cref{lem:bound3convex}, we obtain the following result.

\begin{theorem}
Let \(\coll = \{A_1, \ldots, A_m\}\) be a collection of convex regions in the plane, and let \(T_\alpha = \bigcap_i \dilate{A_i}\). Then \(\alpha(\coll)\) is at most the \magicValue \(\alpha^* \approx 0.6068\).
\end{theorem}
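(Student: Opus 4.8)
The plan is to reduce the $m$-set problem to a three-set problem and then invoke the bound already established for three convex sets, so the whole argument is essentially bookkeeping on top of \cref{lem:helly} and \cref{lem:bound3convex}. The first lemma guarantees a subcollection $\coll' \subseteq \coll$ of size at most three with $\alpha(\coll) = \alpha(\coll')$, and the second bounds $\alpha$ for any triple of convex sets in terms of their largest pairwise Hausdorff distance. I would simply feed the output of the first lemma into the second while tracking the distance normalization.

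First I would apply \cref{lem:helly} to obtain $\coll'$ with $|\coll'| \le 3$ and $\alpha(\coll') = \alpha(\coll)$. Next I would record that the largest pairwise Hausdorff distance $d'$ among the sets of $\coll'$ satisfies $d' \le 1$, since $d'$ is a maximum taken over a subset of the pairs of $\coll$, whose global maximum is normalized to $1$ by the standing assumption of this section. If $|\coll'| = 3$, \cref{lem:bound3convex} then yields $d' \ge \alpha(\coll')/\alpha^*$, which rearranges to $\alpha(\coll') \le d'\,\alpha^* \le \alpha^*$; combined with $\alpha(\coll) = \alpha(\coll')$ this is exactly the claim.

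The one thing to handle with care is the case $|\coll'| < 3$, since \cref{lem:bound3convex} is phrased for exactly three sets. I would dispose of this by padding $\coll'$ with duplicate copies of one of its sets until it has three elements: duplicating a set leaves $T_\alpha$ (hence $\alpha(\coll')$) unchanged and does not increase the largest pairwise distance, since a set has Hausdorff distance $0$ to itself, so \cref{lem:bound3convex} applies verbatim to the padded triple. Alternatively one can dispatch the small cases directly: a single set gives $\alpha(\coll') = 0$, and two sets $A,B$ with $\dh(A,B) = d' \le 1$ give $\alpha(\coll') = d'/2 \le 1/2 < \alpha^*$ by the two-set analysis of \cref{sec:2sets}. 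Either route makes the conclusion $\alpha(\coll) \le \alpha^*$ immediate.

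I do not expect a genuine obstacle here, since both heavy lifts — the Helly-type reduction and the trigonometric optimization defining the \magicValue — are already carried out in the preceding lemmas. The subtlety that deserves the most attention is making the distance accounting airtight: verifying that the normalization $\max_{i,j}\dh(A_i,A_j)=1$ of the full collection bounds $d'$ for the subcollection, and that the padding (or the explicit small-case analysis) does not inflate $d'$ above $1$, which is precisely what turns the scale-sensitive inequality of \cref{lem:bound3convex} into the clean bound $\alpha(\coll)\le\alpha^*$.
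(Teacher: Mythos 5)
Your proposal is correct and follows exactly the route the paper intends: the paper gives no explicit proof beyond the sentence ``Combining the previous lemma with \cref{lem:bound3convex}, we obtain the following result,'' which is precisely your reduction via \cref{lem:helly} followed by the scale-sensitive bound of \cref{lem:bound3convex}. Your explicit handling of the $|\coll'|<3$ cases and of the distance normalization is a welcome tightening of details the paper leaves implicit.
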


\subsection{Algorithms}
\label{ss:algo}

For any given collection of polygons $\coll = \{A_1,\ldots,A_m\}$, we want to compute $\alpha(\coll)$.
We present two algorithms, a simple approximation algorithm and a more complex exact algorithm. They both use the same decision algorithm as a subroutine.
To be precise, given a collection of sets $\coll$ and some $\alpha$, the decision algorithm decides if $\alpha \leq \alpha(\coll)$.
We first present an algorithm for the decision problem.
Then we sketch how they are used in the approximation algorithm and the exact algorithm. We denote all vertices and edges of the $A_i$ as \emph{features} of $\coll$.

\paragraph{Decision algorithm}
Assuming the input has total complexity \(n\), we can test whether a given value of $\alpha \leq \alpha(\mathcal{M})$ as follows. Compute the intersection $T_\alpha$ of the dilations $\dilate{A_1},\ldots,\dilate{A_m}$ in \(O(n^2\log n)\) time, using the construction of an arrangement of straight and circular arcs~\cite{edelsbrunner1992arrangements,halperin18}.
The set \(T_\alpha\) will always have at most quadratic
complexity, but it can be disconnected.
Next we compute $\dilate{T_\alpha}$. We take every connected component $T$ of $T_\alpha$ separately, compute $\dilate{T}$,
and then compute their union. Since the connected components of $T_\alpha$ are
disjoint and can be partitioned into $O(n^2)$ convex pieces,
the Minkowski sums of these pieces with $D_\alpha$ form a set of pseudo-disks with total complexity $O(n^2)$, see~\cite{kedem1986union}.
It is known that such a union has $O(n^2)$ complexity and can be computed in $O(n^2\log^2 n)$ time~\cite{agarwal2008state,kedem1986union}.
Thus, we can compute $T_\alpha$ in $O(n^2\log^2 n)$ time.

Note that $T_\alpha \subseteq \dilate{A_i}$, by definition. It remains to test $A_i \subseteq \dilate{T_\alpha}$, for each $A_i$.
We test all those containments
by a standard plane sweep~\cite{bcko-cgaa-08} in \(O(n^2 \log n)\) time.
As soon as we find any proper intersection between an arc of $\partial (\dilate{T_\alpha})$ and some edge of some $\partial A_i$, we can stop the sweep and conclude that $\alpha$ needs to be larger. If there were no proper intersections of this type, there were only $O(n^2)$ events (and not $O(n^3)$), including the ones between edges of different $\partial A_i$.
When there are no proper intersections, each shape $A_i$ lies fully inside or outside $\dilate{T_\alpha}$. We can test this in $O(n^2\log n)$ time (replace each $A_i$ by a single point and then test by a plane sweep or planar point location~\cite{bcko-cgaa-08}), and conclude that $\alpha$ must be larger or smaller than the one tested.
Thus this decision algorithm takes $O(n^2 \log^2 n)$ total time.

\paragraph{Approximation algorithm}
The decision algorithm leads to a simple
approximation algorithm to find a value of \(\alpha\) that is at most a factor \(1 + \eps\) from the optimum.
We can perform \(\lceil\log 1/\eps\rceil\) steps of binary search in the range \([1/2, 1]\), testing if \(\dilate{T_\alpha}\) contains all \(A_i\) using the above decision algorithm.
This takes \(O(n^2 \log^2 n\log 1 / \eps)\) time in total.

\begin{figure}[bt]
\centering
    \includegraphics{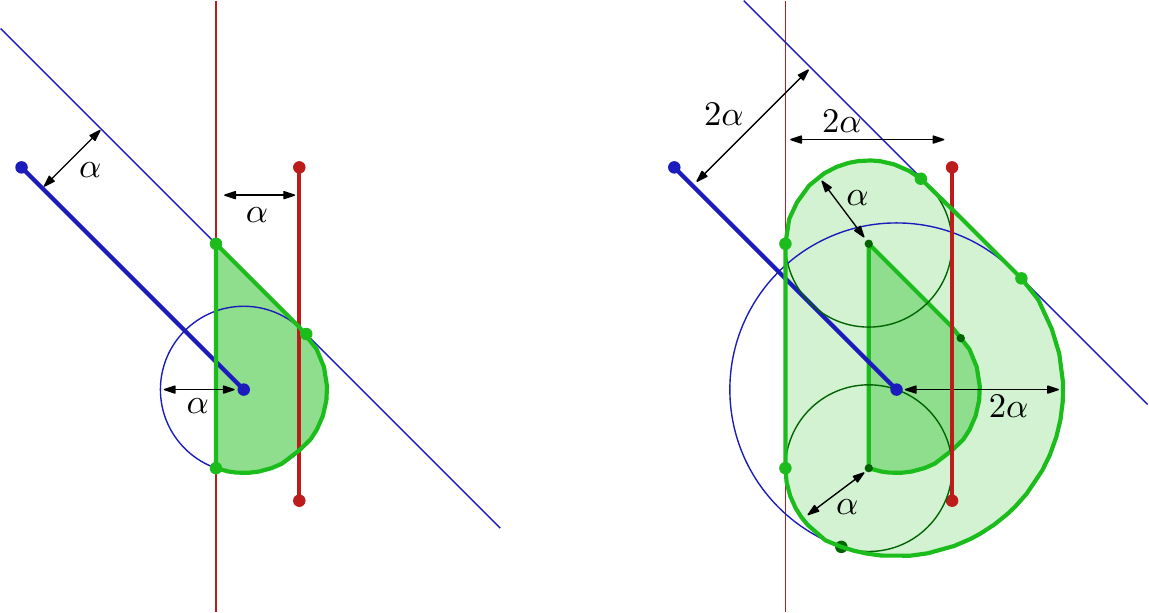}
    \caption{Left, two sets shown by red and blue line segments, and the construction of $T_\alpha$ from lines parallel to edges of $\coll$ and circles centered at vertices of $\coll$.
    Right, construction of $\dilate{T_\alpha}$ from lines at distance $2\alpha$ from edges of $\coll$, circles of radius $2\alpha$ centered at vertices of $\coll$, and circles of radius $\alpha$ centered at certain vertices of $T_\alpha$.}
    \label{fig:arrangements}
\end{figure}

\paragraph{Exact computation}
We can compute an exact value of $\alpha(\coll)$ in polynomial time. To this end, we imagine a continuous process where we grow $\alpha$ from $1/2$, and keep track of $\dilate{T_\alpha}$. The first time (smallest $\alpha$) $\dilate{T_\alpha}$ covers all $A_i$, we have found the Hausdorff distance $\alpha(\coll)$ corresponding to the Hausdorff middle, and we can construct $T_\alpha$ explicitly as the Hausdorff middle.
Such an approach is sometimes called \emph{wavefront propagation} or \emph{continuous Dijkstra}; it has been used before to compute Voronoi diagrams~\cite{bcko-cgaa-08,fortune1987sweepline},
straight skeletons~\cite{aichholzer1995novel} and shortest paths on terrains~\cite{mitchell1987discrete}. This approach is combinatorial if there are finitely many events and we can determine each on time, before it occurs.
Instead of explicitly maintaining $\dilate{T_\alpha}$ when $\alpha$ grows, we will determine a polynomial-size set of critical $\alpha$ values that contains the sought one, and find it by binary search,
using the decision algorithm described above.

The value $\alpha(\coll)$ that we aim to compute occurs when $\dilate{T_\alpha}$ has grown just enough to cover all $A_i$. This can happen in three ways, roughly corresponding to a vertex of $A_i$ becoming covered, an edge of $A_i$ becoming covered at some point ``in the middle'', or a hole of $\dilate{T_\alpha}$ collapsing and disappearing interior to $A_i$.
We call the vertices, edges, and arcs of $\coll$ and
$\dilate{T_\alpha}$ the \emph{features} (of their boundaries).
The three ways of covering all $A_i$, expressed in the features of $\coll$ and $\dilate{T_\alpha}$, are now: (1) a feature of $\dilate{T_\alpha}$ coincides with a vertex of some $A_i$, (2) a vertex of $\dilate{T_\alpha}$ lies on a feature of some $A_i$, or (3) features of $\dilate{T_\alpha}$ collapse and cause a hole of $\dilate{T_\alpha}$ to disappear.
In the last case, when that hole was inside some $A_i$, this can be the event where $A_i$ is covered fully for the first time. In all cases, one, two, or three features of $\dilate{T_\alpha}$ and zero or one  feature of some $A_i$ are involved, and at
most three features in total. When three edge or circular arc features pass through a single point for some value of $\alpha$, we say that these features are \emph{concurrent}. Similarly, when an edge or circular arc passes through a vertex for some $\alpha$, we say they are concurrent.

It can be that more than three features of $\dilate{T_\alpha}$ pass through the point where, e.g., a hole in $\dilate{T_\alpha}$ disappears, but then we can still determine this critical value by examining just three features of $\dilate{T_\alpha}$, and computing the $\alpha$ value when the curves of these three features are concurrent.

Let us analyze which features make up the boundary of $\dilate{T_\alpha}$, see Figure~\ref{fig:arrangements}.
There are four types: (1) straight edges, which are at distance $2\alpha$ from an edge of $\coll$, and parallel to it, (2) circular arcs of radius $2\alpha$, which are parts of circles centered at vertices of $\coll$, (3) circular arcs of radius $\alpha$, centered at a vertex of $T_\alpha$, and (4) vertices where features of types (1)--(3) meet. Every one of the features of the boundary of $\dilate{T_\alpha}$ is determined by one or two features of $\coll$. In particular, each arc of type (3) is centered on
an intersection point which is a vertex of $T_\alpha$, of which there can be $\Theta(n^2)$ in the worst case (Figure~\ref{fig:two-sets-quadratic-lower-bound}).
Depending on the type of intersection point, its trace may be linear in $\alpha$, or may follow a low-degree algebraic curve (when the intersection has equal distance $\alpha$ to an edge and a vertex of $\coll$).

Since any critical value can be determined as a concurrency of two (vertex and edge or arc) or three features (three edges or arcs) from $\coll$ and $\dilate{T_\alpha}$, and features of $\dilate{T_\alpha}$ in turn are determined by up to two features of $\coll$, every critical value
depends on at most six features of the input $\coll$.
If we choose any tuple with up to six features of
$\coll$, and compute the $\alpha$ values that may be critical, we obtain a set of $O(n^6)$ values that contain all critical $\alpha$ values, among which $\alpha(\coll)$. We can compute this set in $O(n^6)$ time, as it requires $O(1)$ time for each tuple of up to six features of $\coll$.

\begin{theorem}
Let $\coll$
be a collection of $m$ polygonal shapes in the plane with total complexity~$n$, such that the Hausdorff distance between any pair is at most $1$, and let $\eps>0$ be a constant.
The Hausdorff middle of \(\mathcal{M}\) can be computed exactly in $O(n^6)$ time, and approximated within $\eps$
in $O(n^2\log^2  n \log 1/\eps)$ time.
\end{theorem}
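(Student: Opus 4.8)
The plan is to assemble the two algorithms directly from the decision procedure and the feature analysis established just above, so the proof is mostly a matter of bounding the number of decision-algorithm calls and verifying that binary search is applicable in each case. First I would record the monotonicity that makes binary search valid. Since $T_\alpha=\bigcap_i(A_i\oplus D_\alpha)$ grows monotonically in $\alpha$, and $\dilate{T_\alpha}=T_\alpha\oplus D_\alpha$ is the dilation of a growing set by a growing disk, the predicate ``$\dilate{T_\alpha}$ contains every $A_i$'' is monotone: once it holds for some $\alpha_0$ it holds for all $\alpha\ge\alpha_0$. Thus $\alpha(\coll)$ is a well-defined threshold, and the decision algorithm (running in $O(n^2\log^2 n)$ time) tests on which side of this threshold a queried value lies.

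For the approximation, I would binary-search on $[1/2,1]$, which contains $\alpha(\coll)$ by the earlier bound $\alpha(\coll)\in[1/2,1]$. After $k$ halvings the interval has length $2^{-k-1}$; since $\alpha(\coll)\ge 1/2$, an additive error of $\eps/2$ already yields a multiplicative $(1+\eps)$ guarantee, so $\lceil\log 1/\eps\rceil$ steps suffice. Each step is one decision call, giving $O(n^2\log^2 n\log 1/\eps)$ total time.

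For the exact algorithm, the core is the claim (argued in the feature analysis above) that $\alpha(\coll)$ is a \emph{critical value}: the first $\alpha$ at which $\dilate{T_\alpha}$ covers all $A_i$ is realized by one of three concurrency events---a feature of $\dilate{T_\alpha}$ meeting a vertex of some $A_i$, a vertex of $\dilate{T_\alpha}$ lying on a feature of some $A_i$, or features of $\dilate{T_\alpha}$ collapsing to close a hole interior to some $A_i$. Each such event is a concurrency of at most three features drawn from $\coll\cup\partial\dilate{T_\alpha}$, and every feature of $\dilate{T_\alpha}$ is in turn pinned down by at most two features of $\coll$ (a straight edge is parallel to one $\coll$-edge; a radius-$2\alpha$ arc is centered at one $\coll$-vertex; a radius-$\alpha$ arc is centered at a vertex of $T_\alpha$, which is itself the intersection of two $\coll$-features). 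Hence every candidate critical value is a function of at most six features of $\coll$. I would therefore enumerate all tuples of up to six features, solve in $O(1)$ time the constant-degree algebraic equation expressing their concurrency to obtain $O(1)$ candidate values per tuple, and collect the resulting $O(n^6)$ candidates in $O(n^6)$ time. Sorting and binary-searching these candidates with the decision algorithm then isolates $\alpha(\coll)$ using $O(\log n)$ decision calls, i.e.\ $O(n^2\log^3 n)$ time, which is dominated by the $O(n^6)$ enumeration.

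The main obstacle I expect is the exact characterization of the critical events, in particular the hole-collapse case, where more than three features of $\dilate{T_\alpha}$ may pass through the disappearing point. The resolution indicated above is that it suffices to select any three of the concurrent features to pin down the value, so the $O(n^6)$ bound survives; one still must argue that every way $\dilate{T_\alpha}$ can first cover all $A_i$ falls into exactly one of the three enumerated event types and is witnessed by at most three relevant $\dilate{T_\alpha}$-features. A secondary point of care is that radius-$\alpha$ arcs are centered at vertices of $T_\alpha$ whose positions trace low-degree algebraic curves in $\alpha$, so the concurrency equations are algebraic rather than linear; this affects only the constant in the per-tuple solve, not the overall $O(n^6)$ bound.
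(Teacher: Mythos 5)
Your proposal matches the paper's proof essentially step for step: the same $O(n^2\log^2 n)$ decision algorithm drives a $\lceil\log 1/\eps\rceil$-step binary search on $[1/2,1]$ for the approximation, and the exact algorithm enumerates the same $O(n^6)$ candidate critical values from tuples of at most six input features (each feature of $\dilate{T_\alpha}$ being determined by at most two features of $\coll$) and then isolates $\alpha(\coll)$ among the candidates using the decision procedure. The only differences are cosmetic---you make the monotonicity of the containment predicate explicit, which the paper leaves implicit---though note that \emph{sorting} the $O(n^6)$ candidates costs $O(n^6\log n)$, so to stay within the stated $O(n^6)$ bound one should instead halve the candidate set by repeated median selection.
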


Parametric search could result in a faster exact algorithm, but for this one would need to express whether input features are close to a given \(S_\alpha\) in terms of low-degree polynomials. This is nontrivial given that \(S_\alpha\) as function of \(\alpha\) varies in a complex manner.

\section{Discussion and future research}

We have defined and studied the Hausdorff middle of two planar sets, leading to a new morph between these sets. We also considered the Hausdorff middle for more than two sets.
While we assumed that the input sets are simply connected, our definition of middle and the morph immediately generalize to more general sets, like sets with multiple components and holes. In this sense our definition of middle is very general. Other interpolation methods between shapes do not generalize to more than two input sets and cannot easily handle sets with multiple components.

There are many interesting open questions.
For example, when both input sets are one-dimensional curves, is there a natural way to define a Hausdorff middle curve that is also $1$-dimensional?

Besides the maximal middle set, there are other options for a Hausdorff middle. For example, we can choose $S_\alpha$ clipped to the convex hull of $A\cup B$,
which is also a valid Hausdorff middle. In Figure~\ref{fig:cost-of-connectedness}, the green shape would be reduced to the part inside the square, which may be more natural. This Hausdorff middle can also be used in a morph.

Another interesting question could be if, for two shapes \(A\) and \(B\), we can find a translation or rigid motion of \(A\) such that some measure on the Hausdorff middle (e.g. area, perimeter, diameter) is minimised.

For two or more shapes in the plane, we could also define a middle based on area of symmetric difference. Here we may want to average the areas for the middle shape, and possibly choose the middle that minimizes perimeter. This problem is related to minimum-length area bisection~\cite{koutsoupias1992optimal}.

Similarly, for a set of curves, we could define
a middle curve based on the Fr\'echet distance. This appears related to the Fr\'echet distance of a set of curves rather than just a pair~\cite{dumitrescu2004frechet}.

\bibliographystyle{plain}
\bibliography{bibliography}

\end{document}